\let\mathscr\mathbscr
\newtheorem{Theorem}{Theorem}
\newtheorem{Proposition}{Proposition}
\newtheorem{Lemma}{Lemma}
\newtheorem{Corollary}{Corollary}
\newtheorem{Example}{Example}
\newtheorem{Remark}{Remark}
\newtheorem{Definition}{Definition}
\begin{document}
%
\title{On the Correlation between Boolean Functions of Sequences of Random Variables}

\author{\IEEEauthorblockN{Farhad Shirani Chaharsooghi}
\IEEEauthorblockA{Electrical Engineering and\\Computer Science\\
University of Michigan\\
Ann Arbor, Michigan, 48105\\
Email: fshirani@umich.edu }
\and
\IEEEauthorblockN{S. Sandeep Pradhan}
\IEEEauthorblockA{Electrical Engineering and\\Computer Science\\
University of Michigan\\
Ann Arbor, Michigan, 48105\\
Email: pradhanv@umich.edu}}


%


\maketitle

\begin{abstract}
In this paper, we establish a new inequality tying together the effective length and the maximum correlation between the outputs of an arbitrary pair of Boolean functions which operate on two sequences of correlated random variables. We derive a new upper-bound on the correlation between the outputs of these functions. The upper-bound is useful in various disciplines which deal with common-information. We build upon Witsenhausen's \cite{ComInf2} bound on maximum-correlation.  The previous upper-bound did not take the effective length of the Boolean functions into account.   
One possible application of the new bound is to characterize the communication-cooperation tradeoff in multi-terminal communications. In this problem, there are lower-bounds on the effective length of the Boolean functions due to the rate-distortion constraints in the problem, as well as lower bounds on the output correlation at different nodes due to the multi-terminal nature of the problem.
\end{abstract}


%
\IEEEpeerreviewmaketitle

\section{Introduction}
{A}{} fundamental problem of broad theoretical and practical interest is to characterize the maximum correlation between the outputs of a pair of functions of random sequences.  Consider the two distributed agents shown in Figure \ref{fig:agents}.
A pair of correlated discrete memoryless sources (DMS) are fed to the two agents. These agents are to each make a binary decision. The goal of the problem is to maximize the correlation between the outputs of these agents subject to specific constraints on the decision functions. The study of this setup has had impact on a variety of disciplines, for instance, by taking the agents to be two encoders in the distributed source coding problem \cite{FinLen,arxiv2}, or two transmitters in the interference channel problem \cite{arxiv2}, or Alice and Bob in a secret key-generation problem \cite{security2, security3}, or two agents in a distributed control problem \cite{control}. 

A special case of the problem is the study of common-information (CI) generated by the two agents. As an example, consider two encoders in a Slepian-Wolf (SW) setup. Let $U_1,U_2$, and $V$ be independent, non-constant binary random variables.
\begin{figure}[!t]
\centering
\includegraphics[height=1.2in]{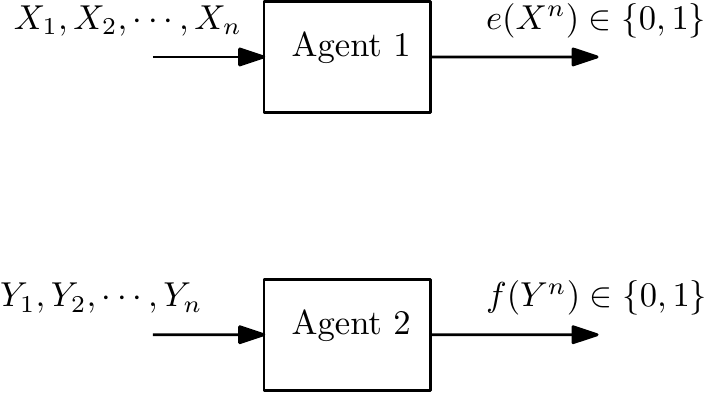}
 \caption{Correlated Boolean decision functions.}
\label{fig:agents}
\end{figure}
Then, an encoder observing the DMS $X=(V,U_1)$, and an encoder observing $Y=(V,U_2)$ agree on the value of $V$ with probability one. The random variable $V$ is called the CI observed by the two encoders. These encoders require a sum-rate equal to $H(V)+H(U_1)+H(U_2)$ to transmit the source to the decoder. This gives a reduction in rate equal to the entropy of $V$, compared to the transmission of the sources over independent point-to-point channels. The gain in performance is directly related to the entropy of the CI. So, it is desirable to maximize the entropy of the CI between the encoders.
 
  In \cite{ComInf1}, the authors investigated multi-letterization as a method for increasing the CI. They showed that multi-letterization does not lead to an increase in the CI. More precisely, they prove the following statement: 
  
{ \textit{Let $X$ and $Y$ be two sequences of DMSs. Let $f_{n}(X^n)$ and $g_{n}(Y^n)$ be two sequences of functions which converge to one another in probability. Then, the normalized entropies $\frac{1}{n}H(f_{n}(X^n))$, and $\frac{1}{n}H(g_{n}(Y^n))$ are less than or equal to the entropy of the CI between $X$ and $Y$ for large $n$. }}
 
 A stronger version of the result was proved by Witsenhausen \cite{ComInf2}, where maximum correlation between the outputs is upper-bounded subject to the following restrictions on the decision functions:
\\\textit{ 1) The entropy of the binary output is fixed.
 \\2) The agents cooperate with each other.
} 
 
 It was shown that maximum correlation is achieved if both users output a single element of the string without further processing (e.g. each user outputs the first element of its corresponding string). This was used to conclude that common-information can not be induced by multi-letterization. While, the result was used extensively in a variety of areas such as information theory, security, and control \cite{security2,security3, control}, in many problems, there are additional constraints on the set of admissible decision functions. For example, one can consider constraints on the `effective length' of the decision functions. This is a valid assumption, for instance, in the case of communication systems, the users have lower-bounds on their effective lengths due to the rate-distortion requirements in the problem \cite{arxiv2}. 
 
 In this paper, the problem under these additional constraints is considered.  A new upper-bound on the correlation between the outputs of arbitrary pairs of Boolean functions is derived. The bound is presented as a function of the dependency spectrum of the Boolean functions.  This is done in several steps. 
First, the effective length of an additive Boolean function is defined. Then, we use a method similar to \cite{ComInf2}, and map the Boolean functions to the set of real-valued functions. Using tools in real analysis, we find an additive decomposition of these functions. The decomposition components have well-defined effective lengths. Using the decomposition we find the dependency spectrum of the Boolean function. The dependency spectrum is a generalization of the effective length and is defined for non-additive Boolean functions. Lastly, we use the dependency spectrum to derive the new upper-bound.

The rest of the paper is organized as follows: Section \ref{sec:not} presents the notation used in the paper. Section \ref{sec:eff} develops useful mathematical machinery to analyze Boolean function. Section \ref{sec:corr} contains the main result of the paper. Finally, Section \ref{sec:con} concludes the paper.  
  
 

\section{Notation}\label{sec:not}
In this section, we introduce the notation used in this paper. We represent random variables by capital letters such as $X, U$. Sets are denoted by calligraphic letters such as $\mathcal{X}, \mathcal{U}$.  Particularly, the set of natural numbers and real numbers are shown by $\mathbb{N}$, and $\mathbb{R}$, respectively. 
For random variables, the $n$-length vector $(X_1,X_2,\cdots,X_n), X_i\in \mathcal{X}$ is denoted by $X^n\in \mathcal{X}^n$. 
 The binary string $(i_1,i_2,\cdots,i_n), i_j\in \{0,1\}$ is written as $\mathbf{i}$.
 The vector of random variables $(X_{j_1},X_{j_2},\cdots, X_{j_k}), j_i\in [1,n], j_i\neq j_k$, is denoted by $X_{\mathbf{i}}$, where $i_{j_l}=1, \forall l\in [1,k]$. For example, take $n=3$, the vector $(X_1,X_3)$ is denoted by $X_{101}$, and the vector $(X_1,X_2)$ by $X_{110}$.  
 For two binary strings $\mathbf{i},\mathbf{j}$, we write $\mathbf{i}<\mathbf{j}$ if and only if $i_k<j_k, \forall k\in[1,n]$. For a binary string $\mathbf{i}$ we define $N_{\mathbf{i}}\triangleq w_H(\mathbf{i})$, where $w_H$ denotes the Hamming weight. Lastly, the vector $\sim \mathbf{i}$ is the element-wise complement of $\mathbf{i}$.

\section{The \textit{Dependency Spectrum} of a Function}\label{sec:eff}
In this section, we study the correlation between the output of a Boolean function with subsets of the input. Particularly, we are interested in the answers to questions such as `How strongly does the first element $X_1$ affect the output of $e(X^n)$?' `Is this effect amplified when we take $X_2$ into account as well?' `Is there a subset of random variables that (almost) determines the value of the output?'. We formulate these questions in mathematical terms, and find a characterization of the dependency spectrum of a Boolean function. The dependency spectrum is a vector which captures the correlation between different subsets of the input elements with each element of the output.  As an intermediate step, we define the effective length of an additive Boolean function below:
\begin{Definition}
 For a Boolean function $e:\{0,1\}^n\to \{0,1\}$ defined by $e(X^n)=\sum_{i\in \mathsf{J}}X_i, \mathsf{J}\subset [1,n] $, where the addition operator is the binary addition, the effective length is defined as the cardinality of the set $\mathsf{J}$.
\end{Definition}
   For a general Boolean function (e.g. non-additive), we find a decomposition of ${e}$ into a set of functions ${e}_{\mathbf{i}}, \mathbf{i}\in \{0,1\}^n$ whose effective length is well-defined. First, we provide a mapping from the set of Boolean functions to the set of real functions. This allows us to use the tools available in real analysis to analyze these functions.
Fix a discrete memoryless source $X$, and a Boolean function defined by ${e}:\{0,1\}^n\to \{0,1\}$.  Let $P\left(e(X^n)=1\right)=q$. The real-valued function corresponding to $e$ is represented by $\tilde{e}$, and is defined as follows:
\begin{align}
\tilde{e}(X^n)=    \begin{cases}
      1-q, & \qquad  e(X^n)=1, \\
      -q. & \qquad\text{otherwise}.
    \end{cases}
\end{align}
\begin{Remark}
 Note that $\tilde{e}$ has zero mean and variance $q(1-q)$.
\label{Rem:exp_0}
\end{Remark}
The random variable $\tilde{e}(X^n)$ has finite variance on the probability space $(\mathcal{X}^n, 2^{\mathcal{X}^n}, P_{X^n})$. The set of all such functions is denoted by $\mathcal{H}_{X,n}$. More precisely,  we define $\mathcal{H}_{X,n}\triangleq L_2(\mathcal{X}^n, 2^{\mathcal{X}^n}, P_{X^n})$ as the separable Hilbert space of all measurable functions $\tilde{h}:\mathcal{X}^n\to \mathbb{R}$.  Since X is a DMS, the isomorphy relation 
 \begin{equation}
 \mathcal{H}_{X,n}= \mathcal{H}_{X,1}\otimes \mathcal{H}_{X,1}\cdots \otimes \mathcal{H}_{X,1}
\label{eq:Hil_Dec1}
 \end{equation}
 holds \cite{Reed_and_Simon}, where $\otimes$ indicates the tensor product. 
\begin{Example}
Let n=1. The Hilbert space $\mathcal{H}_{X,1}$ is the space of all measurable functions $\tilde{h}:\mathcal{X}\to \mathbb{R}$. The space is spanned by the two linearly independent functions  $\tilde{h}_1(X)=\mathbbm{1}(X)$ and $\tilde{h}_2(X)=\mathbbm{1}(\bar{X})$, where $\bar{X}=X\oplus 1$. We conclude that the space is two-dimensional.
\label{Ex:ex1}  
\end{Example}

 \begin{Remark}
  The tensor operation in $\mathcal{H}_{X,n}$ is real multiplication (i.e. $f_1, f_2\in \mathcal{H}_{X,1}: f_1(X_1)\otimes f_2(X_2)\triangleq f_1(X_1)f_2(X_2)$). Let $\{f_i(X)|i\in [1,d]\}$ be a basis for  $\mathcal{H}_{X,1}$, then a basis for  $\mathcal{H}_{X,n}$ would be the set of all the real multiplications of these basis elements: $\{\Pi_{j\in [1,n]}f_{i_j}(X_j), i_j\in [1,d]\}$. 
\end{Remark}

 Example \ref{Ex:ex1} gives a decomposition of the space $\mathcal{H}_{X,1}$. Next, we introduce another decomposition of $\mathcal{H}_{X,1}$ which turns out to be very useful. Let $\mathcal{I}_{X,1}$ be the subset of all measurable functions of $X$ which have 0 mean, and let $\gamma_{X,1}$ be the set of constant real functions of $X$.  We argue that  $\mathcal{H}_{X,1}=\mathcal{I}_{X,1}\oplus \gamma_{X,1}$ gives a decomposition of $\mathcal{H}_{X,1}$. 
 $\mathcal{I}_{X,1}$ and  $\gamma_{X,1}$  are linear subspaces of $\mathcal{H}_{X,1}$. $\mathcal{I}_{X,1}$ is the null space of the linear functional which takes an arbitrary function $\tilde{f}\in \mathcal{H}_{X,1}$ to its expected value $\mathbb{E}_{X}(\tilde{f})$. The null space of any non-zero linear functional is a hyper-space in $\mathcal{H}_{X,1}$. So, $\mathcal{I}_{X,1}$ is a one-dimensional subspace of $\mathcal{H}_{X,1}$. From Remark \ref{Rem:exp_0}, $\tilde{e}_1\in \mathcal{I}_{X,1}$. We conclude that any element of $\mathcal{I}_{X,1}$  can be written as $c\tilde{e}_1(X^n), c\in \mathbb{R}$. 
 $\gamma_{X,1}$ is also one dimensional. It is spanned by the function $\tilde{g}(X)=1$.
   Consider an arbitrary element $\tilde{f}\in \mathcal{H}_{X,1}$. One can write $\tilde{f}= \tilde{f}_1+\tilde{f}_2$ where $\tilde{f}_1=\tilde{f}-\mathbb{E}_{X}(\tilde{f})\in \mathcal{I}_{X,1}$, and $\tilde{f}_2= \mathbb{E}_{X}(\tilde{f})\in  \gamma_{X,1}$.   \label{Ex:one_dim}
Replacing $\mathcal{H}_{X,1}$ with $\mathcal{I}_{X,1}\oplus \gamma_{X,1}$ in \eqref{eq:Hil_Dec1}, we have:
 \begin{align}
 \mathcal{H}_{X,n}&=\otimes_{i=1}^n \mathcal{H}_{X,1}=\otimes_{i=1}^n (\mathcal{I}_{X,1}\oplus \gamma_{X,1})\nonumber
 \\&\stackrel{(a)}{=} \oplus_{\mathbf{i}\in \{0,1\}^n} (\mathcal{G}_{i_1}\otimes \mathcal{G}_{i_2}\otimes\dotsb \otimes \mathcal{G}_{i_n}),
 \label{eq:Hil_Dec2}
\end{align}
where
\begin{align*}
 \mathcal{G}_j=
\begin{cases}
 \gamma_{X,1}  \qquad \ j=0,\\
 \mathcal{I}_{X,1} \qquad \ j=1,
\end{cases}
\end{align*}
and, in (a), we have used the distributive property of tensor products over direct sums. 
\begin{Remark}
 Equation \eqref{eq:Hil_Dec2}, can be interpreted as follows: for any $\tilde{e}\in \mathcal{H}_{X,n},  n\in \mathbb{N}$, we can find a decomposition $\tilde{e}=\sum_{\mathbf{i}}\tilde{e}_{\mathbf{i}}$, where $\tilde{e}_{\mathbf{i}}\in \mathcal{G}_{i_1}\otimes \mathcal{G}_{i_2}\otimes\dotsb \otimes \mathcal{G}_{i_n}$. $\tilde{e}_{\mathbf{i}}$ can be viewed as the component of $\tilde{e}$ which is only a function of $\{X_{i_j}|i_j=1\}$. In this sense, the collection $\{\tilde{e}_{\mathbf{i}}|\sum_{j\in[1,n]}i_j=k\}$, is the set of components of $\tilde{e}$ whose effective length is $k$.
\label{Rem:Dec}
\end{Remark}
In order clarify the notation, we provide the following example:
\begin{Example}
 Let $X$ be a binary symmetric source, and let $e(X_1,X_2)=X_1\wedge X_2$ be the binary `and' function. The corresponding real function is:
 \begin{align*}
 \tilde{e}(X_1,X_2)=
\begin{cases}
 -\frac{1}{4}  \qquad \ (X_1,X_2)\neq (1,1),\\
\frac{3}{4} \qquad \ (X_1,X_2)=(1,1).
\end{cases}
\end{align*} 
Lagrange interpolation gives $\tilde{e}=X_1X_2-\frac{1}{4}$. The decomposition is given by:
\begin{align*}
 &\tilde{e}_{1,1}= (X_1-\frac{1}{2})(X_2-\frac{1}{2}), \tilde{e}_{1,0}=\frac{1}{2}(X_1-\frac{1}{2}), 
 \\&\tilde{e}_{0,1}= \frac{1}{2}(X_2-\frac{1}{2}),\tilde{e}_{0,0}=0.
\end{align*} 
The variances of these functions are given below:
\begin{align*}
 &Var(\tilde{e})=\frac{3}{16}, Var(\tilde{e}_{0,1})=Var(\tilde{e}_{1,0})=Var(\tilde{e}_{1,1})=\frac{1}{16}. 
\end{align*}
As we shall see in the next section, these variances play a major role in determining the correlation preserving properties of $\tilde{e}$. The vector whose elements include these variances is called the dependency spectrum of $e$.
In the perspective of the effective length, the function $\tilde{e}$ has $\frac{2}{3}$ of its variance distributed between $\tilde{e}_{0,1}$, and $\tilde{e}_{1,0}$ which have effective length one, and $\frac{1}{3}$ of the variance is on $\tilde{e}_{1,1}$ which is has effective length two. 
 
\end{Example}
Similar to the above examples, for arbitrary $\tilde{e}\in \mathcal{H}_{X,n},  n\in \mathbb{N}$, we find a decomposition $\tilde{e}=\sum_{\mathbf{i}}\tilde{e}_{\mathbf{i}}$, where $\tilde{e}_{\mathbf{i}}\in \mathcal{G}_{i_1}\otimes \mathcal{G}_{i_2}\otimes\dotsb \otimes \mathcal{G}_{i_n}$. We characterize $\tilde{e}_{\mathbf{i}}$ in terms of products of the basis elements of $\otimes_{j\in[1,n]}\mathcal{G}_{i_j}$ using the following result in linear algebra:
\begin{Lemma}[\cite{Reed_and_Simon}]
 Let $\mathcal{H}_{i},i \in[1,n]$ be vector spaces over a field $F$. Also, let $\mathcal{B}_{i}=\{v_{i,j}|j\in [1,d_i]\}$ be the basis for $\mathcal{H}_i$ where $d_i$ is the dimension of $\mathcal{H}_i$. Then, any element $v\in \otimes_{i\in [1,n]}\mathcal{H}_i$ can be written as $v=\sum_{j_1\in [1,d_1]}\sum_{j_2\in [1,d_2]}\cdots \sum_{j_n\in [1,d_n]} c_{j^n} v_{j_1}\otimes v_{j_2}\cdots \otimes v_{j_n}$.
 \label{Lem:tensor_dec}
\end{Lemma}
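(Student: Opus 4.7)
The plan is to proceed by induction on $n$, using the standard construction of the tensor product (or, equivalently, the universal property) together with the multilinearity of $\otimes$. The basic observation is that every element of $\otimes_{i\in [1,n]}\mathcal{H}_i$ is by construction a finite sum of pure tensors $w_1\otimes w_2\otimes \cdots \otimes w_n$ with $w_i\in \mathcal{H}_i$, so it suffices to expand each such pure tensor in the claimed form.

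For the base case $n=2$, I would take a pure tensor $w_1\otimes w_2\in \mathcal{H}_1\otimes\mathcal{H}_2$ and write $w_i=\sum_{j_i\in [1,d_i]}c_{i,j_i}v_{i,j_i}$ using the basis $\mathcal{B}_i$. Bilinearity of $\otimes$ then gives
\begin{equation*}
w_1\otimes w_2 \;=\; \sum_{j_1\in[1,d_1]}\sum_{j_2\in[1,d_2]} c_{1,j_1}c_{2,j_2}\; v_{1,j_1}\otimes v_{2,j_2},
\end{equation*}
which is of the required form with $c_{j_1,j_2}=c_{1,j_1}c_{2,j_2}$. An arbitrary element is a finite linear combination of pure tensors, and summing the above expansions and collecting like terms gives the claim for $n=2$.

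For the inductive step, I would invoke the associativity isomorphism $\otimes_{i\in[1,n]}\mathcal{H}_i \cong \bigl(\otimes_{i\in[1,n-1]}\mathcal{H}_i\bigr) \otimes \mathcal{H}_n$. The inductive hypothesis applied to the first factor says that the products $v_{1,j_1}\otimes\cdots\otimes v_{n-1,j_{n-1}}$ span $\otimes_{i\in [1,n-1]}\mathcal{H}_i$, so the base case with $\mathcal{H}_1$ replaced by this tensor product and $\mathcal{H}_2$ replaced by $\mathcal{H}_n$ yields the desired expansion in terms of the full products $v_{1,j_1}\otimes\cdots\otimes v_{n,j_n}$.

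The only real subtlety is being careful that the associativity isomorphism and the multilinear expansion commute in the expected way; this is immediate from the universal property since both sides agree on pure tensors and both are multilinear. Note the lemma only asserts that the products $v_{1,j_1}\otimes\cdots\otimes v_{n,j_n}$ \emph{span} the tensor product (expressibility of every $v$), not that they form a basis, so no independence argument is required here — which is why the proof reduces to just multilinearity plus induction.
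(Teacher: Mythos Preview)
Your argument is correct and is the standard textbook proof of this spanning statement for tensor products of finite-dimensional spaces. Note, however, that the paper does not supply its own proof of this lemma: it is stated with a citation to Reed and Simon and used as a black box, so there is no in-paper proof to compare against. Your write-up would serve perfectly well as a self-contained justification in its place.
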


Since $\mathcal{G}_{i_j}$'s, $j\in [1,n]$ take values from the set $\{\mathcal{I}_{X,1}, \gamma_{X,1}\}$, they are all one-dimensional. For the binary source $X$ with $P(X=1)=q$, define $\tilde{h}$ as:
\begin{align}
\tilde{h}(X)=    \begin{cases}
      1-q, & \text{if } X=1, \\
      -q. & \text{if } X=0.
    \end{cases}
    \label{eq:basis} 
\end{align}
Then, the single element set $\{\tilde{h}(X)\}$ is a basis for $\mathcal{I}_{X,1}$. Also, the function $\tilde{h}(X)=1$ spans $\gamma_{X,1}$. So, using Lemma \ref{Lem:tensor_dec}, $\tilde{e}_{\mathbf{i}}(X^n)= c_{\mathbf{i}}\prod_{t:i_t=1}\tilde{h}(X_{{t}}), c_i\in \mathbb{R}$. We are interested in the variance of $\tilde{e}_{\mathbf{i}}$'s. In the next proposition, we show that the $\tilde{e}_{\mathbf{i}}$'s are uncorrelated and we derive an expression for the variance of $\tilde{e}_{\mathbf{i}}$.
\begin{Proposition}
\label{pr:partfun}
Define $\mathbf{P}_{\mathbf{i}}$ as the variance of $\tilde{e}_{\mathbf{i}}$. The following hold:
 \\ 1) $\mathbb{E}(\tilde{e}_{\mathbf{i}}\tilde{e}_{\mathbf{j}})=0, \mathbf{i}\neq \mathbf{j}$, in other words $\tilde{e}_{\mathbf{i}}$'s are uncorrelated.
 \\ 2)  $\mathbf{P}_{\mathbf{i}}=\mathbb{E}(\tilde{e}_{\mathbf{i}}^2)=c_{\mathbf{i}}^2(q(1-q))^{w_{H}(\mathbf{i})}.$
\end{Proposition}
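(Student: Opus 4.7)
\textbf{Proof plan for Proposition~\ref{pr:partfun}.}
The plan is to exploit the explicit product form $\tilde{e}_{\mathbf{i}}(X^n)=c_{\mathbf{i}}\prod_{t:i_t=1}\tilde{h}(X_t)$ established from Lemma~\ref{Lem:tensor_dec} together with the fact that $X$ is a DMS, so that the coordinates $X_1,\dots,X_n$ are independent. Both claims will then reduce to one-dimensional expectations of $\tilde{h}$ and $\tilde{h}^2$, which can be computed directly from the definition in \eqref{eq:basis}.

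For part~1, I would start by noting that, since $\mathbf{i}\neq\mathbf{j}$, there exists an index $k\in[1,n]$ on which the two strings disagree; without loss of generality $i_k=1$ and $j_k=0$. Writing
\begin{align*}
\mathbb{E}(\tilde{e}_{\mathbf{i}}\tilde{e}_{\mathbf{j}})
= c_{\mathbf{i}}c_{\mathbf{j}}\,
\mathbb{E}\!\Bigl[\,\prod_{t:i_t=1}\tilde{h}(X_t)\cdot\prod_{s:j_s=1}\tilde{h}(X_s)\Bigr],
\end{align*}
I would group the factors by coordinate and, using independence of the $X_t$'s, split the expectation over $t\in[1,n]$ into a product of one-dimensional expectations of $\tilde{h}(X_t)$, $\tilde{h}^2(X_t)$, or $1$, depending on whether $t$ belongs to exactly one of the supports of $\mathbf{i},\mathbf{j}$, to both, or to neither. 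The coordinate $k$ then contributes a lone factor $\mathbb{E}[\tilde{h}(X_k)] = q(1-q)+(1-q)(-q)=0$, killing the whole product. This gives the orthogonality in part~1.

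For part~2, again by independence,
\begin{align*}
\mathbb{E}(\tilde{e}_{\mathbf{i}}^2)
= c_{\mathbf{i}}^2\prod_{t:i_t=1}\mathbb{E}\bigl[\tilde{h}^2(X_t)\bigr].
\end{align*}
A direct one-line calculation from \eqref{eq:basis} gives $\mathbb{E}[\tilde{h}^2(X)]=q(1-q)^2+(1-q)q^2=q(1-q)$. Substituting this and observing that the number of factors equals $w_H(\mathbf{i})$ yields the stated formula $\mathbf{P}_{\mathbf{i}}=c_{\mathbf{i}}^2(q(1-q))^{w_H(\mathbf{i})}$.

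There is no real obstacle here; the proof is essentially a bookkeeping exercise once the product structure of $\tilde{e}_{\mathbf{i}}$ and the independence of the $X_t$'s are invoked. The only mild care needed is in part~1, to verify that the argument is symmetric in $\mathbf{i}$ and $\mathbf{j}$ (either direction of disagreement yields a vanishing one-dimensional factor), and to treat the coordinates where $i_t=j_t=1$ via $\mathbb{E}[\tilde{h}^2(X_t)]$, which is finite and therefore harmless to the vanishing of the whole expression.
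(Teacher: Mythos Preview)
Your proposal is correct and is precisely the direct calculation the paper has in mind: the paper's own proof simply states that part~1 ``follows by direct calculation'' and part~2 ``holds from the independence of $X_i$'s,'' which is exactly the factorization argument via $\tilde{e}_{\mathbf{i}}=c_{\mathbf{i}}\prod_{t:i_t=1}\tilde{h}(X_t)$ and $\mathbb{E}[\tilde{h}(X)]=0$, $\mathbb{E}[\tilde{h}^2(X)]=q(1-q)$ that you spell out.
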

\begin{proof}
 1) follows by direct calculation. 2) holds from the independence of $X_i$'s.
\end{proof}

Next, we find the characterization for $\tilde{e}_{\mathbf{i}}$.

\begin{Lemma}
$\tilde{e}_{\mathbf{i}}=\mathbb{E}_{X^n|X_{\mathbf{i}}}(\tilde{e}|X_{\mathbf{i}})-\sum_{\mathbf{j}< \mathbf{i}} \tilde{e}_{\mathbf{j}}$ gives the unique orthogonal decomposition of $\tilde{e}$ into the Hilbert spaces $\mathcal{G}_{i_1}\otimes \mathcal{G}_{i_2}\cdots\otimes \mathcal{G}_{i_n}, \mathbf{i}\in \{0,1\}^n$.
\label{Lem:unique}
\end{Lemma}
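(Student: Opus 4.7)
The plan has two parts: (i) recognize the decomposition in \eqref{eq:Hil_Dec2} as an orthogonal direct sum, so that each $\tilde{e}_{\mathbf{i}}$ is uniquely determined as the orthogonal projection of $\tilde{e}$ onto $\mathcal{G}_{i_1}\otimes\cdots\otimes\mathcal{G}_{i_n}$; and (ii) verify by induction on $w_H(\mathbf{i})$ that the recursive formula in the statement actually realizes those projections.

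For (i), the key observation is that $\gamma_{X,1}$ (constants) and $\mathcal{I}_{X,1}$ (mean-zero functions) are orthogonal in $\mathcal{H}_{X,1}$, because the inner product of a constant with a zero-mean function vanishes. Since $X$ is a DMS, inner products in $\mathcal{H}_{X,n}$ factor across coordinates. Hence for any two distinct $\mathbf{i},\mathbf{i}'\in\{0,1\}^n$, at any coordinate $k$ where they differ one factor lies in $\gamma_{X,1}$ and the other in $\mathcal{I}_{X,1}$, and that single zero factor kills the entire product. Thus the subspaces appearing in \eqref{eq:Hil_Dec2} are pairwise orthogonal, which forces the decomposition $\tilde{e}=\sum_{\mathbf{i}}\tilde{e}_{\mathbf{i}}$ to be unique.

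For (ii), the workhorse identity is
\begin{equation*}
\mathbb{E}\bigl[\tilde{e}\mid X_{\mathbf{i}}\bigr]=\sum_{\mathbf{j}\,:\,\mathbf{j}\leq\mathbf{i}\text{ coordinatewise}}\tilde{e}_{\mathbf{j}}.
\end{equation*}
To establish it, write $\tilde{e}=\sum_{\mathbf{j}}\tilde{e}_{\mathbf{j}}$ and recall the product form $\tilde{e}_{\mathbf{j}}=c_{\mathbf{j}}\prod_{t:\,j_t=1}\tilde{h}(X_t)$ derived just before the lemma. When $\mathbf{j}\leq\mathbf{i}$ coordinatewise, $\tilde{e}_{\mathbf{j}}$ is already a function of $X_{\mathbf{i}}$ and the conditional expectation fixes it. When $\mathbf{j}\not\leq\mathbf{i}$, there is a coordinate $k$ with $j_k=1$ and $i_k=0$; by the independence of the $X_t$'s the conditional expectation factorizes and acquires a factor $\mathbb{E}[\tilde{h}(X_k)]=0$ (Remark \ref{Rem:exp_0}), killing the term. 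Subtracting the contributions with $\mathbf{j}\leq\mathbf{i}$ and $\mathbf{j}\neq\mathbf{i}$ from both sides yields exactly the formula in the statement, and a straightforward induction on $w_H(\mathbf{i})$, with base case $\tilde{e}_{\mathbf{0}}=\mathbb{E}[\tilde{e}]$, lets me substitute the recursive expressions for those lower-order terms.

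The step that requires the most care is the vanishing of the conditional expectation for $\mathbf{j}\not\leq\mathbf{i}$: this leans essentially on the DMS hypothesis so that $X_{\sim\mathbf{i}}$ is independent of $X_{\mathbf{i}}$ and the expectation factorizes; without independence the argument collapses. Everything else — orthogonality, uniqueness, and the inductive unrolling — is routine once that one factorization is in hand.
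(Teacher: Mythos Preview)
Your argument is correct but takes a different route from the paper. The paper \emph{defines} the $\tilde{e}_{\mathbf{i}}$ by the recursion and then verifies directly, via a somewhat lengthy Noetherian induction (establishing four properties: zero mean, measurability with respect to $X_{\mathbf{i}}$, orthogonality, and vanishing of conditional expectations on smaller index sets), that each recursively built $\tilde{e}_{\mathbf{i}}$ lands in the correct subspace $\mathcal{G}_{i_1}\otimes\cdots\otimes\mathcal{G}_{i_n}$. You instead start from the abstract orthogonal decomposition already guaranteed by \eqref{eq:Hil_Dec2} and the product form $\tilde{e}_{\mathbf{j}}=c_{\mathbf{j}}\prod_{t:j_t=1}\tilde{h}(X_t)$ established just before the lemma, compute $\mathbb{E}[\tilde{e}\mid X_{\mathbf{i}}]$ termwise using the DMS hypothesis to obtain the identity $\mathbb{E}[\tilde{e}\mid X_{\mathbf{i}}]=\sum_{\mathbf{j}\leq\mathbf{i}}\tilde{e}_{\mathbf{j}}$, and read off the recursion for the abstract components; a short induction then identifies them with the recursively defined ones. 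Your approach is cleaner and shorter because it exploits the one-dimensionality of the $\mathcal{G}_{i_j}$ and the explicit basis $\tilde{h}$, whereas the paper's approach is more self-contained (it does not appeal to the product form inside the proof) and along the way produces Proposition~\ref{prop:belong2}, whose items (2)--(4) are used later. The only point worth tightening in your write-up is that the factorization of conditional expectations you invoke for $\mathbf{j}\not\leq\mathbf{i}$ really uses that $X_{\sim\mathbf{i}}$ is independent of $X_{\mathbf{i}}$, which you correctly flag as the one place where the DMS assumption is essential.
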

\begin{IEEEproof}
 Please refer to the Appendix.
\end{IEEEproof}

The following example clarifies the notation used in Lemma \ref{Lem:unique}.
\begin{Example}
 Consider the case where $n=2$. We have the following decomposition of $\mathcal{H}_{X,2}$:
\begin{align}
& \mathcal{H}_{X,2}=(\mathcal{I}_{X,1} \otimes \mathcal{I}_{X,1}) \oplus\nonumber \\
&(\mathcal{I}_{X,1} \otimes \gamma_{X,1})\oplus (\gamma_{X,1}\otimes\mathcal{I}_{X,1}) \oplus (\gamma_{X,1}\otimes\gamma_{X,1}).
 \label{eq:2dimdec}
 \end{align}
Let $\tilde{e}(X_1,X_2)$ be an arbitrary function in $\mathcal{H}_{X,2}$. The unique decomposition of $\tilde{e}$ in the form given in \eqref{eq:2dimdec} is as follows:
\begin{align*}
 \tilde{e}&= \tilde{e}_{1,1}+\tilde{e}_{1,0}+\tilde{e}_{0,1}+\tilde{e}_{0,0},\\ 
 &\tilde{e}_{1,1}= \tilde{e}-\mathbb{E}_{X_2|X_1}(\tilde{e}|X_1)-\mathbb{E}_{X_1|X_2}(\tilde{e}|X_2)+\mathbb{E}_{X_1,X_2}(\tilde{e})
 \\& \tilde{e}_{1,0}=\mathbb{E}_{X_2|X_1}(\tilde{e}|X_1)-\mathbb{E}_{X_1,X_2}(\tilde{e}),
\\& \tilde{e}_{0,1}= \mathbb{E}_{X_1|X_2}(\tilde{e}|X_2)-\mathbb{E}_{X_1,X_2}(\tilde{e}),
\\&\tilde{e}_{0,0}=\mathbb{E}_{X_1,X_2}(\tilde{e}).
\end{align*}
It is straightforward to show that each of the $\tilde{e}_{i,j}$'s, $i,j\in \{0,1\}$, belong to their corresponding subspaces. For instance, $ \tilde{e}_{0,1}$ is constant in $X_1$, and is a $0$ mean function of $X_2$ (i.e. $\mathbb{E}_{X_2}\left(\tilde{e}_{0,1}(x_1,X_2)\right)=0, x_1\in \{0,1\}$), so  $\tilde{e}_{0,1}\in \gamma_{X,1}\otimes\mathcal{I}_{X,1}$.
\end{Example}
The following proposition describes some of the properties of $\tilde{e}_{\mathbf{i}}$ which were derived in the proof of Lemma \ref{Lem:unique}:
\begin{Proposition}
\label{prop:belong2}
The following hold:
\nonumber\\1) $\forall\mathbf{i},\mathbb{E}_{X^n}(\tilde{e}_{\mathbf{i}})$=0.\\
 2) $\forall \mathbf{i}\leq \mathbf{k}$, we have $\mathbb{E}_{X^n|X_{\mathbf{j}}}(\tilde{e}_{\mathbf{i}}|X_{\mathbf{k}})=\tilde{e}_{\mathbf{i}}$.\\
 3) $\mathbb{E}_{X^n}(\tilde{e}_{\mathbf{i}}\tilde{e}_{\mathbf{k}})=0$, for $\mathbf{i}\neq \mathbf{k}$.\\
 4) $\forall \mathbf{k}\leq \mathbf{i}: \mathbb{E}_{X^n|X_{\mathbf{k}}}(\tilde{e}_{\mathbf{i}}|X_{\mathbf{k}})=0.$
 \end{Proposition}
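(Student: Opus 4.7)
The plan is to exploit the explicit product representation $\tilde{e}_{\mathbf{i}}(X^n) = c_{\mathbf{i}} \prod_{t: i_t=1} \tilde{h}(X_t)$ derived just above the proposition, combined with the DMS assumption that the $X_t$ are mutually independent and the zero-mean property $\mathbb{E}[\tilde{h}(X)]=0$ built into the basis of $\mathcal{I}_{X,1}$. All four items then fall out of direct factorization-and-vanishing calculations; there is no structural obstacle, only a handful of edge cases that deserve attention.

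For (1), when $\mathbf{i}\neq\mathbf{0}$ I would invoke independence to write $\mathbb{E}[\tilde{e}_{\mathbf{i}}]=c_{\mathbf{i}}\prod_{t:i_t=1}\mathbb{E}[\tilde{h}(X_t)]=0$. The case $\mathbf{i}=\mathbf{0}$ is special: the empty product makes $\tilde{e}_{\mathbf{0}}$ the constant $c_{\mathbf{0}}$, which must equal $\mathbb{E}[\tilde{e}]$, and that mean is $0$ by Remark \ref{Rem:exp_0}. For (2), observe that $\tilde{e}_{\mathbf{i}}$ is by construction a deterministic function of $X_{\mathbf{i}}$; whenever $\mathbf{i}\leq\mathbf{k}$, the coordinates $X_{\mathbf{i}}$ sit inside $X_{\mathbf{k}}$, so $\tilde{e}_{\mathbf{i}}$ is $\sigma(X_{\mathbf{k}})$-measurable and equals its own conditional expectation.

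For (3), independence lets the expectation factor as $\mathbb{E}[\tilde{e}_{\mathbf{i}}\tilde{e}_{\mathbf{k}}]=c_{\mathbf{i}}c_{\mathbf{k}}\prod_t\mathbb{E}\bigl[\tilde{h}(X_t)^{i_t+k_t}\bigr]$, with exponents in $\{0,1,2\}$; if $\mathbf{i}\neq\mathbf{k}$ then some coordinate $t_0$ has $i_{t_0}+k_{t_0}=1$, contributing the factor $\mathbb{E}[\tilde{h}(X_{t_0})]=0$, and the whole product vanishes. This is essentially a restatement of Proposition \ref{pr:partfun}(1) and can be cited directly if preferred.

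For (4), I read the inequality as strict (equality would flatly contradict (2)), so there is some $t_0$ with $i_{t_0}=1$ and $k_{t_0}=0$. Conditioning on $X_{\mathbf{k}}$ fixes the factors $\tilde{h}(X_t)$ with $k_t=1$; the remaining factors (those with $k_t=0$ and $i_t=1$) are independent of $X_{\mathbf{k}}$ and integrate out one at a time, and the factor at $t_0$ contributes $\mathbb{E}[\tilde{h}(X_{t_0})]=0$, forcing the conditional expectation to vanish. The only real care required throughout is bookkeeping — tracking which coordinates are conditioned versus integrated, and handling the degenerate mask $\mathbf{i}=\mathbf{0}$ in (1) — rather than any conceptual difficulty.
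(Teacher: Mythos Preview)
Your proposal is correct and considerably more direct than the paper's own argument. The paper proves Proposition~\ref{prop:belong2} inside the appendix proof of Lemma~\ref{Lem:unique}, taking the $\tilde{e}_{\mathbf{i}}$ to be defined by the \emph{recursive} formula $\tilde{e}_{\mathbf{i}}=\mathbb{E}(\tilde{e}\mid X_{\mathbf{i}})-\sum_{\mathbf{j}<\mathbf{i}}\tilde{e}_{\mathbf{j}}$ and establishing each of (1)--(4) by Noetherian induction over $(\{0,1\}^n,\leq)$, with nontrivial case analyses for (3) and (4). You instead exploit the product representation $\tilde{e}_{\mathbf{i}}=c_{\mathbf{i}}\prod_{t:i_t=1}\tilde{h}(X_t)$, already available from Lemma~\ref{Lem:tensor_dec} and the one-dimensionality of $\mathcal{I}_{X,1}$ and $\gamma_{X,1}$; independence of the $X_t$ together with $\mathbb{E}[\tilde{h}(X_t)]=0$ then reduces each item to a one-line factorization. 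Your reading of (4) as $\mathbf{k}<\mathbf{i}$ strict matches how the paper itself treats it, and your handling of the $\mathbf{i}=\mathbf{0}$ case via Remark~\ref{Rem:exp_0} is correct. The inductive route is what the paper needs internally to show that the recursive formula of Lemma~\ref{Lem:unique} lands in the correct tensor summand, but for the proposition as a standalone statement your product-form argument is both valid and shorter.
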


Lastly, we derive an expression for $\mathbf{P}_{\mathbf{i}}$:
\begin{Lemma}
For arbitrary $e:\{0,1\}^n\to \{0,1\}$, let $\tilde{e}$ be the corresponding real function, and let $\tilde{e}=\sum_{\mathbf{i}}\tilde{e}_{\mathbf{i}}$ be the decomposition in the form of Equation \eqref{eq:Hil_Dec2}. The variance of each component in the decomposition is given by the following recursive formula $\mathbf{P}_{\mathbf{i}} =\mathbb{E}_{X_{\mathbf{i}}}(\mathbb{E}_{X^n|X_{\mathbf{i}}}^2(\tilde{e}|X_{\mathbf{i}}))-\sum_{\mathbf{j}< \mathbf{i}}\mathbf{P}_{\mathbf{j}}, \forall \mathbf{i}\in \mathbb{F}_2^n$, where $\mathbf{P}_{\underline{0}}\triangleq 0$. 
\label{Lem:power}
\end{Lemma}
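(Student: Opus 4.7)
The plan is to combine the explicit formula for $\tilde{e}_{\mathbf{i}}$ from Lemma~\ref{Lem:unique} with the orthogonality and conditional-expectation properties collected in Proposition~\ref{prop:belong2}. The strategy is \emph{not} to square the recursion in Lemma~\ref{Lem:unique} directly, since that would introduce many cross terms; instead, I would first identify exactly which components of $\tilde{e}$ survive conditioning on $X_{\mathbf{i}}$, and then exploit orthogonality to compute a mean square.

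First I would show that
\[
\mathbb{E}_{X^n\mid X_{\mathbf{i}}}(\tilde{e}\mid X_{\mathbf{i}}) \;=\; \sum_{\mathbf{j}\le \mathbf{i}} \tilde{e}_{\mathbf{j}}.
\]
For $\mathbf{j}\le \mathbf{i}$, property~2 of Proposition~\ref{prop:belong2} gives $\mathbb{E}(\tilde{e}_{\mathbf{j}}\mid X_{\mathbf{i}})=\tilde{e}_{\mathbf{j}}$. For $\mathbf{j}\not\le \mathbf{i}$, there is a coordinate $k$ with $j_k=1$ and $i_k=0$; writing $\tilde{e}_{\mathbf{j}}=c_{\mathbf{j}}\prod_{t:j_t=1}\tilde{h}(X_t)$ as derived before Proposition~\ref{pr:partfun}, and using that $\mathbb{E}(\tilde{h}(X_k))=0$ together with the independence of the $X_t$'s, the conditional expectation over the components not indexed by $\mathbf{i}$ vanishes. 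Summing these two cases over the decomposition $\tilde{e}=\sum_{\mathbf{j}}\tilde{e}_{\mathbf{j}}$ yields the claim.

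Next I would square the identity above and take expectation:
\[
\mathbb{E}_{X_{\mathbf{i}}}\!\bigl(\mathbb{E}^2_{X^n\mid X_{\mathbf{i}}}(\tilde{e}\mid X_{\mathbf{i}})\bigr)
\;=\; \mathbb{E}\!\left(\Bigl(\sum_{\mathbf{j}\le \mathbf{i}} \tilde{e}_{\mathbf{j}}\Bigr)^{\!2}\right)
\;=\; \sum_{\mathbf{j}\le \mathbf{i}} \mathbf{P}_{\mathbf{j}},
\]
where the last step uses the orthogonality of distinct components (property~3 of Proposition~\ref{prop:belong2}). Isolating the term $\mathbf{j}=\mathbf{i}$ and moving the rest to the other side produces exactly the recursive formula
\[
\mathbf{P}_{\mathbf{i}} \;=\; \mathbb{E}_{X_{\mathbf{i}}}\!\bigl(\mathbb{E}^2_{X^n\mid X_{\mathbf{i}}}(\tilde{e}\mid X_{\mathbf{i}})\bigr) - \sum_{\mathbf{j}<\mathbf{i}} \mathbf{P}_{\mathbf{j}}.
\]
The base case $\mathbf{P}_{\underline{0}}=0$ follows because $\tilde{e}_{\underline{0}}\in \gamma_{X,1}^{\otimes n}$ is the constant $\mathbb{E}(\tilde{e})$, which is zero by Remark~\ref{Rem:exp_0}.

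The only subtlety I anticipate is the first step: verifying that components indexed by $\mathbf{j}\not\le \mathbf{i}$ contribute nothing to $\mathbb{E}(\tilde{e}\mid X_{\mathbf{i}})$. Everything afterwards is a clean orthogonal projection argument. Once that vanishing is justified via the product structure of $\tilde{e}_{\mathbf{j}}$ and the mean-zero property of $\tilde{h}$, the rest is a one-line application of Pythagoras in $\mathcal{H}_{X,n}$.
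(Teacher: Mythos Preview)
Your proof is correct and rests on the same two ingredients the paper uses --- the identity $\mathbb{E}(\tilde{e}\mid X_{\mathbf{i}})=\sum_{\mathbf{j}\le\mathbf{i}}\tilde{e}_{\mathbf{j}}$ (which the paper records as equation~\eqref{eq:sum_int}) and the orthogonality of the $\tilde{e}_{\mathbf{j}}$ from Proposition~\ref{prop:belong2} --- but you arrange them more economically. The paper squares the recursion $\tilde{e}_{\mathbf{i}}=\mathbb{E}(\tilde{e}\mid X_{\mathbf{i}})-\sum_{\mathbf{j}<\mathbf{i}}\tilde{e}_{\mathbf{j}}$ directly and then works through a seven-step chain (a)--(g) to collapse the resulting cross terms; you instead rewrite $\mathbb{E}(\tilde{e}\mid X_{\mathbf{i}})$ in the orthogonal basis \emph{first} and apply Pythagoras in one line. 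Your route is shorter and avoids all the bookkeeping. Your decision to verify $\mathbb{E}(\tilde{e}_{\mathbf{j}}\mid X_{\mathbf{i}})=0$ for $\mathbf{j}\not\le\mathbf{i}$ via the explicit product form $\tilde{e}_{\mathbf{j}}=c_{\mathbf{j}}\prod_{t:j_t=1}\tilde{h}(X_t)$ is also the right call: Proposition~\ref{prop:belong2}(4) as stated only covers the nested case $\mathbf{k}\le\mathbf{i}$, so the product argument is what actually closes the gap (the paper implicitly relies on the same fact at step~(d)).
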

\begin{IEEEproof}
 \begin{align*}
 \mathbf{P}_{\mathbf{i}}
 &=Var_{X_{\mathbf{i}}}(\tilde{e}_{\mathbf{i}}(X^n))
 = \mathbb{E}_{X_{\mathbf{i}}}(\tilde{e}^2_{\mathbf{i}}(X^n))-\mathbb{E}_{X_{\mathbf{i}}}^2(\tilde{e}_{\mathbf{i}}(X^n))\\
& \stackrel{(a)}{=} \mathbb{E}_{X_{\mathbf{i}}}\left(\left(\mathbb{E}_{X^n|X_{\mathbf{i}}}(\tilde{e}|X_{\mathbf{i}})-\sum_{\mathbf{j}< \mathbf{i}} \tilde{e}_{\mathbf{j}}\right)^2\right)-0
\\&=\mathbb{E}_{X_{\mathbf{i}}}\left(\mathbb{E}_{X^n|X_{\mathbf{i}}}^2(\tilde{e}|X_{\mathbf{i}})\right)-2\sum_{\mathbf{j}<\mathbf{i}}\mathbb{E}_{X_{\mathbf{i}}}\left(\mathbb{E}_{X^n|X_{\mathbf{i}}}(\tilde{e}|X_{\mathbf{i}})\tilde{e}_{\mathbf{j}}\right)+\mathbb{E}_{X_{\mathbf{i}}}((\sum_{\mathbf{j}< \mathbf{i}} \tilde{e}_{\mathbf{j}})^2)
\\&\stackrel{(b)}=\mathbb{E}_{X_{\mathbf{i}}}\left(\mathbb{E}_{X^n|X_{\mathbf{i}}}^2(\tilde{e}|X_{\mathbf{i}})\right)-2\sum_{\mathbf{j}<\mathbf{i}}\mathbb{E}_{X_{\mathbf{i}}}\left(\mathbb{E}_{X^n|X_{\mathbf{i}}}(\sum_{\mathbf{l}}\tilde{e}_{\mathbf{l}}|X_{\mathbf{i}})\tilde{e}_{\mathbf{j}}\right)
\\&+\mathbb{E}_{X_{\mathbf{i}}}((\sum_{\mathbf{j}< \mathbf{i}} \tilde{e}_{\mathbf{j}})^2)
\\&\stackrel{(c)}{=}\mathbb{E}_{X_{\mathbf{i}}}\left(\mathbb{E}_{X^n|X_{\mathbf{i}}}^2(\tilde{e}|X_{\mathbf{i}})\right)-2\sum_{\mathbf{j}<\mathbf{i}}\mathbb{E}_{X_{\mathbf{i}}}\left(\sum_{\mathbf{l}}\mathbb{E}_{X^n|X_{\mathbf{i}}}(\tilde{e}_{\mathbf{l}}|X_{\mathbf{i}})\tilde{e}_{\mathbf{j}}\right)
\\&+\mathbb{E}_{X_{\mathbf{i}}}((\sum_{\mathbf{j}< \mathbf{i}} \tilde{e}_{\mathbf{j}})^2)
\\&\stackrel{(d)}{=}\mathbb{E}_{X_{\mathbf{i}}}\left(\mathbb{E}_{X^n|X_{\mathbf{i}}}^2(\tilde{e}|X_{\mathbf{i}})\right)-2\sum_{\mathbf{j}<\mathbf{i}}\mathbb{E}_{X_{\mathbf{i}}}\left(\sum_{\mathbf{l}}\mathbbm{1}(\mathbf{l}\leq \mathbf{i})\mathbb{E}_{X^n|X_{\mathbf{i}}}(\tilde{e}_{\mathbf{l}}|X_{\mathbf{i}})\tilde{e}_{\mathbf{j}}\right)
\\&+\mathbb{E}_{X_{\mathbf{i}}}((\sum_{\mathbf{j}< \mathbf{i}} \tilde{e}_{\mathbf{j}})^2)
\\&\stackrel{(e)}{=}\mathbb{E}_{X_{\mathbf{i}}}\left(\mathbb{E}_{X^n|X_{\mathbf{i}}}^2(\tilde{e}|X_{\mathbf{i}})\right)-2\sum_{\mathbf{j}<\mathbf{i}}\mathbb{E}_{X_{\mathbf{i}}}\left(\sum_{\mathbf{l}<\mathbf{i}}\tilde{e}_{\mathbf{l}}\tilde{e}_{\mathbf{j}}\right)+\mathbb{E}_{X_{\mathbf{i}}}((\sum_{\mathbf{j}< \mathbf{i}} \tilde{e}_{\mathbf{j}})^2)
\\&\stackrel{(f)}{=}\mathbb{E}_{X_{\mathbf{i}}}\left(\mathbb{E}_{X^n|X_{\mathbf{i}}}^2(\tilde{e}|X_{\mathbf{i}})\right)-2\sum_{\mathbf{j}<\mathbf{i}}\sum_{\mathbf{l}<\mathbf{i}}\mathbbm{1}(\mathbf{j}=\mathbf{l})\mathbb{E}_{X_{\mathbf{i}}}\left(\tilde{e}_{\mathbf{l}}\tilde{e}_{\mathbf{j}}\right)+\mathbb{E}_{X_{\mathbf{i}}}((\sum_{\mathbf{j}< \mathbf{i}} \tilde{e}_{\mathbf{j}})^2)
\\&=\mathbb{E}_{X_{\mathbf{i}}}\left(\mathbb{E}_{X^n|X_{\mathbf{i}}}^2(\tilde{e}|X_{\mathbf{i}})\right)-2\sum_{\mathbf{j}<\mathbf{i}}\mathbb{E}_{X_{\mathbf{j}}}(\tilde{e}^2_{\mathbf{j}})+\mathbb{E}_{X_{\mathbf{i}}}((\sum_{\mathbf{j}< \mathbf{i}} \tilde{e}_{\mathbf{j}})^2)
 \\&=\mathbb{E}_{X_{\mathbf{i}}}(\mathbb{E}_{X^n|X_{\mathbf{i}}}^2(\tilde{e}|X_{\mathbf{i}}))-2\sum_{\mathbf{j}<\mathbf{i}}\mathbb{E}_{X_{\mathbf{j}}}(\tilde{e}^2_{\mathbf{j}})+\sum_{\mathbf{j}< \mathbf{i}} \sum_{\mathbf{k}<\mathbf{i}}\mathbb{E}_{X_{\mathbf{i}}}(\tilde{e}_{\mathbf{j}}\tilde{e}_{\mathbf{k}})
 \\&
\stackrel{(g)}{=}\mathbb{E}_{X_{\mathbf{i}}}(\mathbb{E}_{X^n|X_{\mathbf{i}}}^2(\tilde{e}|X_{\mathbf{i}}))-2\sum_{\mathbf{j}<\mathbf{i}}\mathbb{E}_{X_{\mathbf{j}}}(\tilde{e}^2_{\mathbf{j}})+\sum_{\mathbf{j}< \mathbf{i}} \sum_{\mathbf{k}<\mathbf{i}}\mathbbm{1}(\mathbf{j}=\mathbf{k})\mathbb{E}_{X_{\mathbf{i}}}(\tilde{e}^2_{\mathbf{j}})
 \\&=\mathbb{E}_{X_{\mathbf{i}}}(\mathbb{E}_{X^n|X_{\mathbf{i}}}^2(\tilde{e}|X_{\mathbf{i}}))-\sum_{\mathbf{j}< \mathbf{i}}\mathbf{P}_{\mathbf{j}},
\end{align*}
where (a) follows from 1) in Proposition \ref{prop:belong2}, b) follows from the decomposition in Equation \eqref{eq:Hil_Dec2}, (c) uses linearity of expectation, (d) uses 4) in Proposition \ref{prop:belong2}, (e) holds from 2) in  \ref{prop:belong2}, and in (f) and (g) we have used 1) in Proposition \ref{prop:belong2}.
\end{IEEEproof}
\begin{Corollary}
 For an arbitrary $e:\{0,1\}^n\to \{0,1\}$ with corresponding real function $\tilde{e}$, and decomposition $\tilde{e}=\sum_{\mathbf{j}}\tilde{e}_{\mathbf{j}}$. Let the variance of $\tilde{e}$ be denoted by $\mathbf{P}$. Then, $\mathbf{P}=\sum_{\mathbf{j}}\mathbf{P}_{\mathbf{j}}$. 
 \label{Cor:power}
\end{Corollary}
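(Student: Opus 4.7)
The plan is to show that because the decomposition $\tilde{e}=\sum_{\mathbf{j}}\tilde{e}_{\mathbf{j}}$ is an orthogonal decomposition in the Hilbert space $\mathcal{H}_{X,n}$, the variance of the sum equals the sum of the variances. This is essentially a Pythagorean-type identity, and every ingredient needed has already been established earlier in the section.

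First I would reduce the variance to a second moment. By Remark \ref{Rem:exp_0}, $\tilde{e}$ has zero mean, so $\mathbf{P}=\mathbb{E}(\tilde{e}^2)$. Similarly, by part 1) of Proposition \ref{prop:belong2}, each component $\tilde{e}_{\mathbf{j}}$ has zero mean, so $\mathbf{P}_{\mathbf{j}}=\mathbb{E}(\tilde{e}_{\mathbf{j}}^2)$. This collapses the claim to proving $\mathbb{E}(\tilde{e}^2)=\sum_{\mathbf{j}}\mathbb{E}(\tilde{e}_{\mathbf{j}}^2)$.

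Next I would substitute the decomposition and expand the square:
\begin{align*}
\mathbb{E}(\tilde{e}^2) = \mathbb{E}\Bigl(\bigl(\sum_{\mathbf{j}}\tilde{e}_{\mathbf{j}}\bigr)^2\Bigr) = \sum_{\mathbf{j}}\mathbb{E}(\tilde{e}_{\mathbf{j}}^2) + \sum_{\mathbf{j}\neq\mathbf{k}}\mathbb{E}(\tilde{e}_{\mathbf{j}}\tilde{e}_{\mathbf{k}}).
\end{align*}
The cross terms vanish by the orthogonality property, which is guaranteed either by part 1) of Proposition \ref{pr:partfun} or equivalently by part 3) of Proposition \ref{prop:belong2}. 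What remains is exactly $\sum_{\mathbf{j}}\mathbb{E}(\tilde{e}_{\mathbf{j}}^2)=\sum_{\mathbf{j}}\mathbf{P}_{\mathbf{j}}$, which finishes the argument.

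There is no real obstacle here, since both the zero-mean and pairwise orthogonality properties of the components have already been proved. The corollary is simply the statement that a Parseval-type identity holds with respect to the orthogonal decomposition \eqref{eq:Hil_Dec2}, so the proof reduces to a two-line calculation citing Proposition \ref{pr:partfun} and Proposition \ref{prop:belong2}.
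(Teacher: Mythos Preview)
Your argument is correct: the zero-mean property (Remark~\ref{Rem:exp_0} and Proposition~\ref{prop:belong2} part~1) reduces variances to second moments, and orthogonality of the $\tilde{e}_{\mathbf{j}}$ (Proposition~\ref{pr:partfun} part~1 or Proposition~\ref{prop:belong2} part~3) kills the cross terms, giving the Parseval identity directly.

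The paper's own proof takes a slightly different route: it obtains the corollary as the special case of Lemma~\ref{Lem:power} where $\mathbf{i}$ is the all-ones vector. In that case $X_{\mathbf{i}}=X^n$, so the conditional expectation collapses to $\tilde{e}$ itself and the recursive formula becomes $\mathbf{P}_{\mathbf{t}}=\mathbb{E}(\tilde{e}^2)-\sum_{\mathbf{j}<\mathbf{t}}\mathbf{P}_{\mathbf{j}}$, which rearranges to the claim. Your approach bypasses the recursive machinery of Lemma~\ref{Lem:power} entirely and goes straight to the Pythagorean identity, which is cleaner and more transparent for this particular statement; the paper's route has the advantage only of reusing a lemma already in hand, since the proof of Lemma~\ref{Lem:power} itself is essentially a more elaborate version of the same expand-and-use-orthogonality computation you carry out.
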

The corollary is a special case of Lemma \ref{Lem:power}, where we have taken $\mathbf{i}$ to be the all ones vector.
The following provides a definition of the dependency spectrum of a Boolean function:
\begin{Definition}[Dependency Spectrum]
 For a Boolean function $e$, the vector of variances $(P_{\mathbf{i}})_{\mathbf{i}\in \{0,1\}^n}$ is called the dependency spectrum of $e$.
\end{Definition}
In the next section, we will use the dependency spectrum to upper-bound the maximum correlation between the outputs of two arbitrary Boolean functions.

\section{Correlation Preservation in Arbitrary Functions}\label{sec:corr}
We proceed with presenting the main result of this paper. Let $(X,Y)$ be a pair of DMS's. Consider two arbitrary Boolean functions $e:\mathcal{X}^n\to \{0,1\}$ and $f:\mathcal{Y}^n\to \{0,1\}$. Let  $ q\triangleq P(e=1)$, $r\triangleq P(f=1)$. Let $\tilde{e}=\sum_{\mathbf{i}}e_{\mathbf{i}}$, and $\tilde{f}=\sum_{\mathbf{i}}f_{\mathbf{i}}$ give the decomposition of these functions as defined in the previous section.
The following theorem provides an  upper-bound on the probability of equality of $e(X^n)$ and $f(Y^n)$. 

\begin{Theorem}
 Let $\epsilon\triangleq P(X\neq Y)$, the following bound holds:
 \begin{align*}
&2\!\!\sqrt{\sum_{\mathbf{i}}\mathbf{P}_{\mathbf{i}}}\sqrt{\sum_{\mathbf{i}}\mathbf{Q}_{\mathbf{i}}}-2\!\!\sum_{\mathbf{i}}C_\mathbf{i}\mathbf{P}_{\mathbf{i}}^{\frac{1}{2}}\mathbf{Q}_{\mathbf{i}}^{\frac{1}{2}} 
\leq  \!\!P(e(X^n)\neq f(Y^n))
\\&\leq 1- 2\sqrt{\sum_{\mathbf{i}}\mathbf{P}_{\mathbf{i}}}\sqrt{\sum_{\mathbf{i}}\mathbf{Q}_{\mathbf{i}}}+2\sum_{\mathbf{i}}C_\mathbf{i}\mathbf{P}_{\mathbf{i}}^{\frac{1}{2}}\mathbf{Q}_{\mathbf{i}}^{\frac{1}{2}} 
,
\end{align*}
 where $C_{\mathbf{i}}\triangleq  (1-2\epsilon)^{N_\mathbf{i}}$, $\mathbf{P}_{\mathbf{i}}$ is the variance of $\tilde{e}_{\mathbf{i}}$, and ${\tilde{e}}$ is the real function corresponding to ${e}$, and $\mathbf{Q}_{\mathbf{i}}$ is the variance of $\tilde{f}_{\mathbf{i}}$, and finally, $N_{\mathbf{i}}\triangleq w_H(\mathbf{i})$.
 \label{th:sec3}
\end{Theorem}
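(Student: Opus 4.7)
The plan is to rewrite the disagreement probability as a second moment, expand the resulting inner product using the dependency decomposition of Section~III, and control each surviving term by tensorizing a single-letter correlation bound. The final statement of the theorem then follows by sandwiching the deterministic piece via AM-GM.

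I begin by noting that, since $e,f\in\{0,1\}$, we have $P(e\neq f)=\mathbb{E}[(e-f)^2]$. Writing $e=\tilde e+q$, $f=\tilde f+r$ and using $\mathbb{E}[\tilde e]=\mathbb{E}[\tilde f]=0$ together with $\mathrm{Var}(\tilde e)=q(1-q)$, $\mathrm{Var}(\tilde f)=r(1-r)$ (Remark~1), this yields
\begin{equation*}
P(e\neq f)=q+r-2qr-2\,\mathbb{E}[\tilde e\,\tilde f].
\end{equation*}
Corollary~1 identifies $q(1-q)=\sum_{\mathbf{i}}\mathbf{P}_{\mathbf{i}}$ and $r(1-r)=\sum_{\mathbf{i}}\mathbf{Q}_{\mathbf{i}}$, so two applications of AM-GM, one to $q(1-r)+r(1-q)$ and one to $(1-q)(1-r)+qr$, sandwich the deterministic piece:
\begin{equation*}
2\sqrt{\sum_{\mathbf{i}}\mathbf{P}_{\mathbf{i}}}\sqrt{\sum_{\mathbf{i}}\mathbf{Q}_{\mathbf{i}}}\leq q+r-2qr\leq 1-2\sqrt{\sum_{\mathbf{i}}\mathbf{P}_{\mathbf{i}}}\sqrt{\sum_{\mathbf{i}}\mathbf{Q}_{\mathbf{i}}}.
\end{equation*}
It therefore remains to bound the single quantity $|\mathbb{E}[\tilde e\,\tilde f]|$ by $\sum_{\mathbf{i}}C_{\mathbf{i}}\mathbf{P}_{\mathbf{i}}^{1/2}\mathbf{Q}_{\mathbf{i}}^{1/2}$.

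Next I expand the inner product using the dependency decompositions: $\mathbb{E}[\tilde e\,\tilde f]=\sum_{\mathbf{i},\mathbf{j}}\mathbb{E}[\tilde e_{\mathbf{i}}\tilde f_{\mathbf{j}}]$. From the explicit product form derived after Lemma~1, $\tilde e_{\mathbf{i}}=c_{\mathbf{i}}\prod_{t:i_t=1}\tilde h_X(X_t)$ and $\tilde f_{\mathbf{j}}=d_{\mathbf{j}}\prod_{s:j_s=1}\tilde h_Y(Y_s)$. Because the pairs $\{(X_k,Y_k)\}_{k=1}^n$ are i.i.d.\ across $k$, the expectation factorizes coordinatewise. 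Any position $k$ at which $i_k\neq j_k$ contributes an isolated factor $\mathbb{E}[\tilde h_X(X_k)]$ or $\mathbb{E}[\tilde h_Y(Y_k)]$, both equal to zero by construction of $\mathcal{I}_{X,1}$ and $\mathcal{I}_{Y,1}$, so all off-diagonal terms vanish and only the diagonal $\mathbf{i}=\mathbf{j}$ survives:
\begin{equation*}
\mathbb{E}[\tilde e_{\mathbf{i}}\tilde f_{\mathbf{i}}]=c_{\mathbf{i}}\,d_{\mathbf{i}}\bigl(\mathbb{E}[\tilde h_X(X)\tilde h_Y(Y)]\bigr)^{N_{\mathbf{i}}}.
\end{equation*}

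The key single-letter ingredient is the bound $|\mathbb{E}[\tilde h_X(X)\tilde h_Y(Y)]|\leq(1-2\epsilon)\sqrt{q(1-q)r(1-r)}$, obtained by a direct evaluation of the joint binary distribution constrained by $P(X\neq Y)=\epsilon$ (equivalently, Witsenhausen's single-letter maximal-correlation bound applied to one coordinate). Raising it to the $N_{\mathbf{i}}$th power and invoking the identities $\mathbf{P}_{\mathbf{i}}=c_{\mathbf{i}}^2(q(1-q))^{N_{\mathbf{i}}}$ and $\mathbf{Q}_{\mathbf{i}}=d_{\mathbf{i}}^2(r(1-r))^{N_{\mathbf{i}}}$ from Proposition~1 gives $|\mathbb{E}[\tilde e_{\mathbf{i}}\tilde f_{\mathbf{i}}]|\leq C_{\mathbf{i}}\mathbf{P}_{\mathbf{i}}^{1/2}\mathbf{Q}_{\mathbf{i}}^{1/2}$; summing over $\mathbf{i}$ yields $|\mathbb{E}[\tilde e\,\tilde f]|\leq\sum_{\mathbf{i}}C_{\mathbf{i}}\mathbf{P}_{\mathbf{i}}^{1/2}\mathbf{Q}_{\mathbf{i}}^{1/2}$. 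Substituting this into the identity of the first step, combined with the AM-GM sandwich, produces both the upper and lower bound in the theorem simultaneously. The main obstacle I expect is the single-letter step: the asymmetry between $q$ and $r$ and the exact joint law compatible with $P(X\neq Y)=\epsilon$ must be handled so that the factor $(1-2\epsilon)^{N_{\mathbf{i}}}$ emerges without additional slack; once that is secured, the coordinatewise factorization and AM-GM assembly are routine.
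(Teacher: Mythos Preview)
Your proposal is correct and follows essentially the same route as the paper: both expand $\mathbb{E}[\tilde e\,\tilde f]$ via the tensor decomposition, kill the off-diagonal terms by the zero-mean property, tensorize the single-letter correlation bound (the paper's Lemma~5) to get $\sum_{\mathbf{i}}C_{\mathbf{i}}\mathbf{P}_{\mathbf{i}}^{1/2}\mathbf{Q}_{\mathbf{i}}^{1/2}$, and then pass to $P(e\neq f)$ through $q+r-2qr$ and an AM-GM/complete-the-square step. Your packaging is slightly more economical---writing $P(e\neq f)=\mathbb{E}[(e-f)^2]$ directly and bounding $|\mathbb{E}[\tilde e\,\tilde f]|$ two-sidedly---whereas the paper computes the joint probabilities $a,b,c,d$ explicitly and obtains the upper bound by applying the lower bound to the complement $h=1\oplus f$; these are cosmetic differences, not a different argument.
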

\begin{IEEEproof}
 Please refer to the appendix.
\end{IEEEproof}

\begin{Remark}
 $C_{\mathbf{i}}$ is decreasing with $N_{\mathbf{i}}$. So, in order to increase $ P(e(X^n)\neq f(Y^n))$, most of the variance $\mathbf{P}_{\mathbf{i}}$ should be distributed on $\tilde{e}_\mathbf{i}$ which have lower $N_{\mathbf{i}}$ (i.e. operate on smaller blocks). Particularly, the lower bound is minimized by setting
 \begin{align*}
 \mathbf{P}_{\mathbf{i}}=
\begin{cases}
 1  \qquad  & \mathbf{i}=\mathbf{i}_1,\\
0 \qquad & otherwise.
\end{cases}
\end{align*}
This recovers the result in \cite{ComInf2}.
 \end{Remark}

We derived a relation between the dependency spectrum of a Boolean function and its correlation preserving properties. This can be used in a variety of disciplines. For example, in communication problems, cooperation among different nodes in a network requires correlated outputs which can be linked to the dependency spectrum through the results derived here. On the other hand, there are restrictions on the dependency spectrum based on the rate-distortion requirements (better performance requires larger effective lengths). We investigate this in \cite{arxiv1}, and show that the large blocklength single-letter coding strategies used in networks are sub-optimal in various problems.
\section{Conclusion}\label{sec:con}
We derived a new bound on the maximum correlation between Boolean functions operating on pairs of sequences of random variable. The bound was presented as a function of the dependency spectrum of the functions. We developed a new mathematical apparatus for analyzing Boolean functions, provided formulas for decomposing the Boolean function into additive components, and for calculating the dependency spectrum of these functions. The new bound has wide ranging applications in security, control and information theory.
\appendix

\subsection{Proof of Lemma \ref{Lem:unique}}
\begin{proof}
 
The uniqueness of such a decomposition follows from the isomorphy relation stated in equation \eqref{eq:Hil_Dec2}. We prove that the $\tilde{e}_{\mathbf{i}}$ given in the lemma are indeed the decomposition into the components of the direct sum. Equivalently, we show that $1)$ $\tilde{e}=\sum_{\mathbf{i}}\tilde{e}_{\mathbf{i}}$, and $2)$ $\tilde{e}_{\mathbf{i}}\in \mathcal{G}_{i_1}\otimes \mathcal{G}_{i_2}\otimes\dotsb \otimes \mathcal{G}_{i_n}, \forall \mathbf{i}\in \{0,1\}^n$.

First we check the equality $\tilde{e}=\sum_{\mathbf{i}}\tilde{e}_{\mathbf{i}}$. Let $\mathbf{t}$ denote the n-length vector whose elements are all ones. We have:
\begin{align*}
\tilde{e}_{\mathbf{t}}=\mathbb{E}_{X^n|X_{\mathbf{t}}}(\tilde{e}|X_{\mathbf{t}})-\sum_{\mathbf{i}<\mathbf{t}}\tilde{e}_{\mathbf{i}}
\stackrel{(a)}{\Rightarrow} \tilde{e}_{\mathbf{t}}+\sum_{\mathbf{i}<\mathbf{t}}\tilde{e}_{\mathbf{i}}=\tilde{e}\stackrel{(b)}{\Rightarrow} \tilde{e}=\sum_{\mathbf{i}\in \{0,1\}^n}\tilde{e}_{\mathbf{i}},
\end{align*}
where in (a) we have used 1) $X_{\mathbf{t}}=X^n$ and 2) for any function $\tilde{f}$ of $X^n$, $\mathbb{E}_{X^n|X^n}(\tilde{f}|X^n)=\tilde{f}$, and (b) holds since $\mathbf{i}<\mathbf{t}\Leftrightarrow \mathbf{i}\neq \mathbf{t}$.
. It remains to show that $\tilde{e}_{\mathbf{i}}\in \mathcal{G}_{i_1}\otimes \mathcal{G}_{i_2}\otimes\dotsb \otimes \mathcal{G}_{i_n}, \forall \mathbf{i}\in \{0,1\}^n$. The next proposition provides a means to verify this property.
\begin{Proposition}
 Fix $\mathbf{i}\in \{0,1\}^n$, define $\mathcal{A}_0\triangleq\{s|i_s=0\}$, and $\mathcal{A}_1\triangleq\{s|i_s=1\}$. $\tilde{f}$ is an element of $\mathcal{G}_{i_1}\otimes \mathcal{G}_{i_2}\otimes\dotsb \otimes \mathcal{G}_{i_n}$ if and only if $(1)$ it is constant in all $X_s$, $s\in \mathcal{A}_0$, and $(2)$ it has $0$ mean on all $X_s$, when $s\in \mathcal{A}_1$. 
\label{prop:belong1}\end{Proposition}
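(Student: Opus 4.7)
The plan is to reduce the statement to a coefficient calculation in the monomial basis developed earlier in the paper. As noted immediately after Lemma \ref{Lem:tensor_dec}, each $\mathcal{G}_{i_j}$ is one-dimensional, with basis element $\tilde{h}(X_j)$ when $i_j=1$ and the constant function $1$ when $i_j=0$. Hence $\mathcal{G}_{i_1}\otimes\dotsb\otimes\mathcal{G}_{i_n}$ is itself one-dimensional and is spanned by the single monomial $\prod_{s\in\mathcal{A}_1}\tilde{h}(X_s)$, so membership in this tensor product is equivalent to $\tilde{f}$ being a scalar multiple of that monomial. Iterating this basis description through the isomorphy \eqref{eq:Hil_Dec1} gives a unique expansion $\tilde{f}=\sum_{S\subseteq[1,n]}c_S\prod_{s\in S}\tilde{h}(X_s)$ of any element of $\mathcal{H}_{X,n}$, which is the object I will manipulate.

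The forward direction (``only if'') is then immediate: writing $\tilde{f}=c\prod_{s\in\mathcal{A}_1}\tilde{h}(X_s)$, property (1) holds since no $X_s$ with $s\in\mathcal{A}_0$ appears in the product, and property (2) follows for each $s\in\mathcal{A}_1$ from independence of the $X_s$'s and $\mathbb{E}(\tilde{h}(X_s))=0$ (Remark \ref{Rem:exp_0}), which pulls a zero factor out of the expectation.

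For the backward direction (``if''), I would work with the general expansion $\tilde{f}=\sum_S c_S\prod_{s\in S}\tilde{h}(X_s)$ and eliminate coefficients in two stages. First, condition (1): for each $s\in\mathcal{A}_0$, the difference $\tilde{f}|_{X_s=1}-\tilde{f}|_{X_s=0}$ equals $(\tilde{h}(1)-\tilde{h}(0))\sum_{S\ni s}c_S\prod_{s'\in S\setminus\{s\}}\tilde{h}(X_{s'})$ and must vanish; since $\tilde{h}(1)-\tilde{h}(0)=1\neq 0$ and the surviving monomials are linearly independent, every coefficient with $S\cap\mathcal{A}_0\neq\emptyset$ is zero. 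Next, condition (2): for each $s\in\mathcal{A}_1$, $\mathbb{E}_{X_s}\tilde{f}=\sum_{S\subseteq\mathcal{A}_1,\,S\not\ni s}c_S\prod_{s'\in S}\tilde{h}(X_{s'})=0$, and linear independence again forces $c_S=0$ for every $S\subsetneq\mathcal{A}_1$. Only $c_{\mathcal{A}_1}$ can survive, so $\tilde{f}$ is a scalar multiple of $\prod_{s\in\mathcal{A}_1}\tilde{h}(X_s)$ and therefore lies in $\mathcal{G}_{i_1}\otimes\dotsb\otimes\mathcal{G}_{i_n}$. The only (minor) obstacle is to make the two linear-independence arguments explicit, but this follows directly from the basis structure already used in the paper, so no additional analytic input is needed.
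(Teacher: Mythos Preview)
Your proof is correct and follows essentially the same route as the paper's. Both arguments expand $\tilde{f}$ in the direct-sum decomposition \eqref{eq:Hil_Dec2} (equivalently, in the monomial basis $\prod_{s\in S}\tilde{h}(X_s)$, since each summand is one-dimensional) and then use conditions (1) and (2) coordinate-by-coordinate to kill every component except the one indexed by $\mathbf{i}$; the only cosmetic difference is that for condition (1) you use the finite difference $\tilde{f}|_{X_s=1}-\tilde{f}|_{X_s=0}$ while the paper uses the conditional expectation $\tilde{f}=\mathbb{E}_{X^n|X_{\sim i_s}}(\tilde{f}|X_{\sim i_s})$, which are equivalent ways of expressing constancy in $X_s$.
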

\begin{proof}
By definition, any element of $\mathcal{G}_{i_1}\otimes \mathcal{G}_{i_2}\otimes\dotsb \otimes \mathcal{G}_{i_n}$ satisfies the conditions in the proposition. Conversely, we show that any function satisfying the conditions $(1)$ and $(2)$ is in the tensor product.  Let $\tilde{f}=\sum_{\mathbf{j}}\tilde{f}_{\mathbf{j}}, \tilde{f}_{\mathbf{j}}\in\mathcal{G}_{j_1}\otimes \mathcal{G}_{j_2}\otimes\dotsb \otimes \mathcal{G}_{j_n}$. Assume $i_k=1$ for some $k\in [1,n]$. Then: 
 \begin{align*}
 &0\stackrel{(2)}{=}\mathbb{E}_{X^n|X_{\sim i_k}}(\sum_{\mathbf{j}}\tilde{f}_{\mathbf{j}}|X_{\sim i_k})\stackrel{(a)}{=}\sum_{\mathbf{j}}\mathbb{E}_{X^n|X_{\sim i_k}}(\tilde{f}_{\mathbf{j}}|X_{\sim i_k})
 \\&\stackrel{(1)}{=} 
 \sum_{\mathbf{j}: j_k=0}\mathbb{E}_{X^n|X_{\sim i_k}}(\tilde{f}_{\mathbf{i}}|X_{\sim i_k})\stackrel{(2)}{=}\sum_{\mathbf{j}: j_k=0}\tilde{f}_{\mathbf{j}}, 
 \end{align*}
 where we have used linearity of expectation in (a), and the last two equalities use the fact that $\tilde{f}_{\mathbf{j}}\in \mathcal{G}_{j_1}\otimes \mathcal{G}_{j_2}\otimes\dotsb \otimes \mathcal{G}_{j_n}$ which means it satisfies properties $(1)$ and $(2)$. So far we have shown that $\tilde{f}=\sum_{\mathbf{j}\geq \mathbf{i}}\tilde{f}_{\mathbf{j}}$. Now assume $i_{k'}=0$. Then:
 \begin{align*}
&\sum_{\mathbf{j}\geq\mathbf{i}}\tilde{f}_{\mathbf{j}}=\tilde{f}\stackrel{(1)}{=}\mathbb{E}_{X^n|X_{\sim i_{k'}}}(\sum_{\mathbf{j}\geq\mathbf{i}}\tilde{f}_{\mathbf{j}}|X_{\sim i_{k'}})=
 \sum_{\mathbf{j}\geq\mathbf{i}}\mathbb{E}_{X^n|X_{\sim i_{k'}}}(\tilde{f}_{\mathbf{j}}|X_{\sim i_{k'}})
 \\&=  \sum_{\mathbf{j}\geq\mathbf{i}: j_{k'}=0}\tilde{f}_{\mathbf{j}} \Rightarrow 
 \sum_{\mathbf{j}\geq\mathbf{i}:j_{k'}=1}\tilde{f}_{\mathbf{j}}=0.
 \end{align*}
So, $\tilde{f}=\sum_{\mathbf{i}\geq\mathbf{j}\geq\mathbf{i}} \tilde{f}_{\mathbf{j}}=\tilde{f}_{\mathbf{i}}$. By assumption we have $\tilde{f}_{\mathbf{i}}\in\mathcal{G}_{i_1}\otimes \mathcal{G}_{i_2}\otimes\dotsb \otimes \mathcal{G}_{i_n}$. 
\end{proof}

 Returning to the original problem, it is enough to show that $\tilde{e}_{\mathbf{i}}$'s satisfy the conditions in Proposition \ref{prop:belong1}.  We prove the stronger result presented in the next proposition.
\begin{Proposition}
\label{prop:belong2}
The following hold:
\nonumber\\1) $\mathbb{E}_{X^n}(\tilde{e}_{\mathbf{i}})$=0.\\
 2) $\forall \mathbf{i}\leq \mathbf{k}$, we have $\mathbb{E}_{X^n|X_{\mathbf{j}}}(\tilde{e}_{\mathbf{i}}|X_{\mathbf{k}})=\tilde{e}_{\mathbf{i}}$.\\
 3) $\mathbb{E}_{X^n}(\tilde{e}_{\mathbf{i}}\tilde{e}_{\mathbf{k}})=0$, for $\mathbf{i}\neq \mathbf{k}$.\\
 4) $\forall \mathbf{k}\leq \mathbf{i}: \mathbb{E}_{X^n|X_{\mathbf{k}}}(\tilde{e}_{\mathbf{i}}|X_{\mathbf{k}})=0.$
 \end{Proposition}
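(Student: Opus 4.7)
The plan is to prove all four items jointly by induction on the Hamming weight $N_{\mathbf{i}} = w_H(\mathbf{i})$, relying only on the recursive definition $\tilde{e}_{\mathbf{i}} = \mathbb{E}_{X^n|X_{\mathbf{i}}}(\tilde{e}|X_{\mathbf{i}}) - \sum_{\mathbf{j}<\mathbf{i}}\tilde{e}_{\mathbf{j}}$ from Lemma \ref{Lem:unique} and the memoryless structure of $X^n$. Before attacking the four items I would establish the preliminary fact that $\tilde{e}_{\mathbf{i}}$ is $\sigma(X_{\mathbf{i}})$-measurable: by induction, $\mathbb{E}(\tilde{e}|X_{\mathbf{i}})$ depends only on $X_{\mathbf{i}}$, and each $\tilde{e}_{\mathbf{j}}$ with $\mathbf{j}<\mathbf{i}$ is a function of $X_{\mathbf{j}} \subseteq X_{\mathbf{i}}$ by the inductive hypothesis. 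Item 2 is then an immediate consequence, since $\mathbf{i}\leq\mathbf{k}$ makes $X_{\mathbf{i}}$ a subvector of $X_{\mathbf{k}}$. Item 1 follows by taking expectations in the recursion, using the inductive hypothesis on all $\tilde{e}_{\mathbf{j}}$ with $\mathbf{j}<\mathbf{i}$, together with Remark \ref{Rem:exp_0} (which gives $\mathbb{E}(\tilde{e})=0$ and handles the base case $\mathbf{i}=\mathbf{0}$).

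The heart of the argument is item 4, which I read in the stronger form ``$\mathbb{E}(\tilde{e}_{\mathbf{i}}|X_{\mathbf{k}})=0$ whenever $\mathbf{i}\not\leq\mathbf{k}$''; this is what the proof of Lemma \ref{Lem:power} actually uses and it specializes to the printed $\mathbf{k}<\mathbf{i}$ statement. Here I would first use the memoryless assumption to reduce to the comparable case: the coordinates of $X_{\mathbf{k}}$ outside $\mathbf{i}$ are independent of $X_{\mathbf{i}}$, so $\mathbb{E}(\tilde{e}_{\mathbf{i}}|X_{\mathbf{k}}) = \mathbb{E}(\tilde{e}_{\mathbf{i}}|X_{\mathbf{k}\wedge\mathbf{i}})$ with $\mathbf{k}\wedge\mathbf{i}<\mathbf{i}$ strictly. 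For the comparable case $\mathbf{k}<\mathbf{i}$, plug the recursion into $\mathbb{E}(\tilde{e}_{\mathbf{i}}|X_{\mathbf{k}})$, apply the tower property to the first term so that $\mathbb{E}(\mathbb{E}(\tilde{e}|X_{\mathbf{i}})|X_{\mathbf{k}}) = \mathbb{E}(\tilde{e}|X_{\mathbf{k}}) = \sum_{\mathbf{l}\leq\mathbf{k}}\tilde{e}_{\mathbf{l}}$ (the recursion at $\mathbf{k}$), and split the correction sum as $\sum_{\mathbf{j}<\mathbf{i}}\mathbb{E}(\tilde{e}_{\mathbf{j}}|X_{\mathbf{k}}) = \sum_{\mathbf{j}\leq\mathbf{k}}\tilde{e}_{\mathbf{j}} + \sum_{\mathbf{j}<\mathbf{i},\,\mathbf{j}\not\leq\mathbf{k}}\mathbb{E}(\tilde{e}_{\mathbf{j}}|X_{\mathbf{k}})$. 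The first subsum uses item 2 for each such $\mathbf{j}$, and the second subsum vanishes by the inductive hypothesis on item 4 applied to each $\mathbf{j}$ (strictly smaller Hamming weight). Because $\mathbf{k}<\mathbf{i}$ forces $\{\mathbf{l}:\mathbf{l}\leq\mathbf{k}\} = \{\mathbf{j}<\mathbf{i}:\mathbf{j}\leq\mathbf{k}\}$, the two surviving sums cancel and the conditional expectation is $0$.

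Item 3 is then a quick consequence: for $\mathbf{i}\neq\mathbf{k}$ at least one, say $\mathbf{k}$, fails to dominate the other; pulling the $\sigma(X_{\mathbf{k}})$-measurable factor $\tilde{e}_{\mathbf{k}}$ outside the conditional expectation yields $\mathbb{E}(\tilde{e}_{\mathbf{i}}\tilde{e}_{\mathbf{k}}) = \mathbb{E}(\tilde{e}_{\mathbf{k}}\mathbb{E}(\tilde{e}_{\mathbf{i}}|X_{\mathbf{k}}))$, and the inner conditional expectation is $0$ by the strong form of item 4. The main obstacle, and the reason for the joint induction, is the tangle between items 2 and 4 inside the proof of item 4: the splitting needs item 2 on indices dominated by $\mathbf{k}$ and item 4 on indices not dominated by $\mathbf{k}$, and since the printed item 4 only handles the comparable case, the independence-based collapse $\mathbb{E}(\,\cdot\,|X_{\mathbf{k}}) = \mathbb{E}(\,\cdot\,|X_{\mathbf{k}\wedge\mathbf{j}})$ is essential to invoke the inductive hypothesis on arbitrary $\mathbf{j}\not\leq\mathbf{k}$. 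This step is the piece of the argument that genuinely uses the product structure of the source, and it is where I expect to spend the most care in writing out the details.
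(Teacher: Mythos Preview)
Your proposal is correct and, for items 1 and 2, essentially identical to the paper's argument. The interesting divergence is in how you handle items 3 and 4. The paper proves item 3 \emph{directly}, expanding both $\tilde{e}_{\mathbf{i}}$ and $\tilde{e}_{\mathbf{k}}$ via the recursion and carrying out a three-case analysis ($\mathbf{i}$ and $\mathbf{k}$ incomparable, $\mathbf{i}\leq\mathbf{k}$, $\mathbf{k}\leq\mathbf{i}$); this is by far the longest part of its proof. For item 4 the paper runs a two-tier induction, first treating the special case $\mathbf{k}=\mathbf{i}-\mathbf{i}_t$ and then bootstrapping to arbitrary $\mathbf{k}<\mathbf{i}$. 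You instead prove the strengthened item 4 ($\mathbf{i}\not\leq\mathbf{k}$) in one shot, using the DMS independence to collapse $\mathbb{E}(\,\cdot\,|X_{\mathbf{k}})$ to $\mathbb{E}(\,\cdot\,|X_{\mathbf{k}\wedge\mathbf{i}})$, and then obtain item 3 for free by conditioning on $X_{\mathbf{k}}$ and pulling $\tilde{e}_{\mathbf{k}}$ out. Your route is noticeably shorter and isolates exactly where the product structure of the source is used; the paper's direct attack on item 3 re-derives much of what your strengthened item 4 already encodes. The only thing to be careful about when writing it up is the order within the inductive step: establish items 2 and (strong) 4 at weight $w$ before invoking them to conclude item 3 for pairs whose larger index has weight $w$.
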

\begin{proof}
 1) For two n-length binary vectors $\mathbf{i}$, and $\mathbf{j}$, we write $\mathbf{i}\leq \mathbf{j}$ if $i_k\leq j_k, \forall k\in[1,n]$. The set $\{0,1\}^n$ equipped with $\leq$ is a well-founded set (i.e. any subset of $\{0,1\}^n$ has at least one minimal element).  The following presents the principle of Noetherian induction on well-founded sets:
 \begin{Proposition}[Principle of Noetherian Induction]
 Let $(A,\preccurlyeq)$ be a well-founded set. To prove the property $P(x)$ is true for all elements $x$ in $A$, it is sufficient to prove the following
 \\1) \textbf{Induction Basis:} $P(x)$ is true for all minimal elements in $A$. 
 \\2) \textbf{Induction Step:} For any non-minimal element $x$ in $A$, if $P(y)$ is true for all minimal $y$ such that $y\prec x$, then it is true for $x$.
 \end{Proposition}
 We will use Noetherian induction to prove the result.
  Let $\mathbf{i}_j, j\in [1,n]$ be the $j$th element of the standard basis. Then $\tilde{e}_{\mathbf{i}_j}= \mathbb{E}_{X^n|X_{ j}}(\tilde{e}|X_{ j})$. By the smoothing property of expectation, $\mathbb{E}_{X^n}(\tilde{e}_{\mathbf{i}_j})=\mathbb{E}_{X^n}(\tilde{e})=0$. Assume that $\forall \mathbf{j}<\mathbf{i}$,  $\mathbb{E}_{X^n}(\tilde{e}_{\mathbf{j}})=0$. Then,
\begin{align*}
 \mathbb{E}_{X^n}(\tilde{e}_{\mathbf{i}})
&=\mathbb{E}_{X^n}\left(\mathbb{E}_{X^n|X_{\mathbf{i}}}(\tilde{e}|X_{\mathbf{i}})-\sum_{\mathbf{j}< \mathbf{i}} \tilde{e}_{\mathbf{j}}\right)
\\&= 
\mathbb{E}_{X^n}(\tilde{e})- \sum_{\mathbf{j}< \mathbf{i}}  \mathbb{E}_{X^n}(\tilde{e}_{\mathbf{j}})
=0- \sum_{\mathbf{j}< \mathbf{i}}  0 =0.
\end{align*}
\\2) This statement is also proved by induction. $\mathbb{E}_{X^n|X_{\mathbf{i}}}(\tilde{e}|X_{\mathbf{i}})$ is a function of $X_{\mathbf{i}}$, so by induction $\tilde{e}_{\mathbf{i}}=\mathbb{E}_{X^n|X_{\mathbf{i}}}(\tilde{e}|X_{\mathbf{i}})-\sum_{\mathbf{j}<\mathbf{i}}\tilde{e}_{\mathbf{k}}$ is also a function of $X_{\mathbf{i}}$. 
\\3) Let $\mathbf{i}_k, k\in [1,n]$ be defined as the $k$th element of the standard basis, and take $j,j'\in [1,n], j\neq j'$. We have:
\begin{align*}
 &\mathbb{E}_{X^n}(\tilde{e}_{\mathbf{i}_j}\tilde{e}_{\mathbf{i}_{j'}})=\mathbb{E}_{X^n}(\mathbb{E}_{X^n|X_j}(\tilde{e}|X_j)\mathbb{E}_{X^n|X_{j'}}(\tilde{e}|X_{j'}))
\\& \stackrel{(a)}{=} \mathbb{E}_{X^n}(\mathbb{E}_{X^n|X_j}(\tilde{e}|X_j)) \mathbb{E}_{X^n}(\mathbb{E}_{X^n|X_{j'}}(\tilde{e}|X_{j'}))
 \stackrel{(b)}{=} \mathbb{E}^2_{X^n}(\tilde{e})=0,
\end{align*}
where we have used the memoryless property of the source in (a) and (b) results from the smoothing property of expectation. We extend the argument by Noetherian induction. Fix $\mathbf{i}, \mathbf{k}$. Assume that $\mathbb{E}_{X^n}(\tilde{e}_{\mathbf{j}}\tilde{e}_{\mathbf{j}'})=\mathbbm{1}(\mathbf{j}=\mathbf{j}')\mathbb{E}_{X^n}(\tilde{e}^2_{\mathbf{j}}), \forall  \mathbf{j}< \mathbf{i}, \mathbf{j}'\leq \mathbf{k}$, and $\forall  \mathbf{j}\leq \mathbf{i}, \mathbf{j}'< \mathbf{k}$. 
\begin{align*}
  &\mathbb{E}_{X^n}(\tilde{e}_{\mathbf{i}}\tilde{e}_{\mathbf{k}})
  =\mathbb{E}_{X^n}\left(\left(\mathbb{E}_{X^n|X_{\mathbf{i}}}(\tilde{e}|X_{\mathbf{i}})-\sum_{\mathbf{j}< \mathbf{i}} \tilde{e}_{\mathbf{j}}\right)\left(\mathbb{E}_{X^n|X_{\mathbf{k}}}(\tilde{e}|X_{\mathbf{k}})-\sum_{\mathbf{j}'< \mathbf{k}} \tilde{e}_{\mathbf{j}'}\right)\right)\\
  &=\mathbb{E}_{X_n}\left(\mathbb{E}_{X^n|X_{\mathbf{i}}}(\tilde{e}|X_{\mathbf{i}})\mathbb{E}_{X^n|X_{\mathbf{k}}}(\tilde{e}|X_{\mathbf{k}})\right)
  -\sum_{\mathbf{j}< \mathbf{i}} \mathbb{E}_{X^n}\left(\tilde{e}_{\mathbf{j}}\mathbb{E}_{X^n|X_{\mathbf{k}}}(\tilde{e}|X_{\mathbf{k}})\right)
  \\&- \sum_{\mathbf{j}'< \mathbf{k}} \mathbb{E}_{X^n}\left(\tilde{e}_{\mathbf{j}'}\mathbb{E}_{X^n|X_{\mathbf{i}}}(\tilde{e}|X_{\mathbf{i}})\right)
  + \sum_{\mathbf{j}< \mathbf{i}, \mathbf{j}'<\mathbf{k}}\mathbb{E}_{X^n}(\tilde{e}_{\mathbf{j}}\tilde{e}_{\mathbf{j}'}).
\end{align*}
The second and third terms in the above expression can be simplified as follows. First, note that:
\begin{align}
&\tilde{e}_{\mathbf{i}}=\mathbb{E}_{X^n|X_{\mathbf{i}}}(\tilde{e}|X_{\mathbf{i}})-\sum_{\mathbf{j}< \mathbf{i}} \tilde{e}_{\mathbf{j}}\Rightarrow \sum_{\mathbf{j}\leq \mathbf{i}} \tilde{e}_{\mathbf{j}}=\mathbb{E}_{X^n|X_{\mathbf{i}}}(\tilde{e}|X_{\mathbf{i}}).
\label{eq:sum_int}
\end{align}
Our goal is to simplify $\mathbb{E}_{X^n}(\tilde{e}_{\mathbf{j}}\mathbb{E}_{X^n|X_{\mathbf{j}'}}(\tilde{e}|X_{\mathbf{j}'}))$. We proceed by considering two different cases:
\\\textbf{Case 1:} $\mathbf{i}\nleq \mathbf{k}$ and $\mathbf{k}\nleq \mathbf{i}$: 

Let $\mathbf{j}<\mathbf{i}$:
\begin{align*}
&\mathbb{E}_{X^n}(\tilde{e}_{\mathbf{j}}\mathbb{E}_{X^n|X_{\mathbf{k}}}(\tilde{e}|X_{\mathbf{k}}))
\stackrel{\eqref{eq:sum_int}}{=} \mathbb{E}_{X^n}(\tilde{e}_{\mathbf{j}}\sum_{\mathbf{l}\leq \mathbf{k}} \tilde{e}_{\mathbf{j}}))
\\&=\sum_{\mathbf{l}\leq \mathbf{k}} \mathbb{E}_{X^n}(\tilde{e}_{\mathbf{j}}\tilde{e}_{\mathbf{l}})=\sum_{\mathbf{l}\leq \mathbf{k}}\mathbbm{1}({\mathbf{j}=\mathbf{l}})  \mathbb{E}_{X^n}(\tilde{e}_{\mathbf{j}}^2)=\mathbbm{1}({\mathbf{j}\leq\mathbf{k}})  \mathbb{E}_{X^n}(\tilde{e}_{\mathbf{j}}^2).
\end{align*}
By the same arguments, for $\mathbf{j}'\leq\mathbf{k}$:
\begin{align*}
  \mathbb{E}_{X^n}\left(\tilde{e}_{\mathbf{j}'}\mathbb{E}_{X^n|X_{\mathbf{i}}}(\tilde{e}|X_{\mathbf{i}})\right)=
  \mathbbm{1}({\mathbf{j}'\leq\mathbf{i}})  \mathbb{E}_{X^n}(\tilde{e}_{\mathbf{j}'}^2).
\end{align*}

Replacing the terms in the original equality we get:
\begin{align*}
   \mathbb{E}_{X^n}(\tilde{e}_{\mathbf{i}}\tilde{e}_{\mathbf{k}})
   &= \mathbb{E}_{X^n}\left(\mathbb{E}_{X^n|X_{\mathbf{i}}}(\tilde{e}|X_{\mathbf{i}})\mathbb{E}_{X^n|X_{\mathbf{k}}}(\tilde{e}|X_{\mathbf{k}})\right)
   -\sum_{\mathbf{j}< \mathbf{i}}\mathbbm{1}({\mathbf{j}\leq\mathbf{k}})  \mathbb{E}_{X^n}(\tilde{e}_{\mathbf{j}}^2)
   \\&- \sum_{\mathbf{j}'\leq \mathbf{k}} \mathbbm{1}({\mathbf{j}'\leq\mathbf{i}})  \mathbb{E}_{X^n}(\tilde{e}_{\mathbf{j}'}^2)+ \sum_{\mathbf{j}< \mathbf{i}, \mathbf{j}'<\mathbf{k}}\mathbbm{1}(\mathbf{j}=\mathbf{j}')\mathbb{E}_{X^n}(\tilde{e}_{\mathbf{j}}^2)\\
   &= \mathbb{E}_{X^n}\left(\mathbb{E}_{X^n|X_{\mathbf{i}}}(\tilde{e}|X_{\mathbf{i}})\mathbb{E}_{X^n|X_{\mathbf{k}}}(\tilde{e}|X_{\mathbf{k}})\right)-\sum_{\mathbf{j}\leq \mathbf{i}\wedge \mathbf{k}}\mathbb{E}_{X^n}(\tilde{e}_{\mathbf{j}}^2)\\
   &\stackrel{(a)}{=}\mathbb{E}_{X^n}(\mathbb{E}_{X^n|X_{ \mathbf{i}\wedge \mathbf{k}}}^2(\tilde{e}(X^n)|X_{ \mathbf{i}\wedge \mathbf{k}}))-\sum_{\mathbf{j}\leq \mathbf{i}\wedge \mathbf{k}}\mathbb{E}_{X^n}(\tilde{e}_{\mathbf{j}}^2)
   \\&\stackrel{(b)}{=}\mathbb{E}_{X^n}(\mathbb{E}_{X^n|X_{ \mathbf{i}\wedge \mathbf{k}}}^2(\tilde{e}(X^n)|X_{ \mathbf{i}\wedge \mathbf{k}}))-\mathbb{E}_{X^n}\left((\sum_{\mathbf{j}\leq \mathbf{i}\wedge \mathbf{k}}\tilde{e}_{\mathbf{j}})^2\right)\stackrel{\eqref{eq:sum_int}}{=}0
\end{align*}
Where in (b) we have used that $\tilde{e}_{\mathbf{i}}$'s are uncorrelated, and (a) is proved below:
\begin{align*}
&\mathbb{E}_{X^n}\left(\mathbb{E}_{X^n|X_{\mathbf{i}}}(\tilde{e}|X_{\mathbf{i}})\mathbb{E}_{X^n|X_{\mathbf{k}}}(\tilde{e}|X_{\mathbf{k}})\right)
\\&= \sum_{x_{ \mathbf{i}\wedge \mathbf{k}}}P(x_{\mathbf{i}\wedge \mathbf{k}})
\\&\left( \left(\sum_{x_{{|\mathbf{i}}- \mathbf{k}|^+}}P(x_{|{\mathbf{i}}- \mathbf{k}|^+})
\mathbb{E}_{X^n|X_{\mathbf{i}}}(\tilde{e}|X_{\mathbf{i}})\right)\left(\sum_{x_{{|\mathbf{k}}- \mathbf{i}|^+}}P(x_{|{\mathbf{k}}- \mathbf{i}|^+})\mathbb{E}_{X^n|X_{\mathbf{k}}}(\tilde{e}|X_{\mathbf{k}}\right)\right)\\
&=\sum_{x_{ \mathbf{i}\wedge \mathbf{k}}}P(x_{\mathbf{i}\wedge \mathbf{k}}) \mathbb{E}_{X^n|X_{\mathbf{i}\wedge\mathbf{k}}}^2(\tilde{e}|x_{\mathbf{i}\wedge\mathbf{k}})\\
&=\mathbb{E}_{X^n}(\mathbb{E}_{X^n|X_{ \mathbf{i}\wedge \mathbf{k}}}^2(\tilde{e}(X^n)|X_{ \mathbf{i}\wedge \mathbf{k}})).
\end{align*}
\\\textbf{Case 2:} Assume $\mathbf{i}\leq \mathbf{k}$:
\begin{align*}
   & \mathbb{E}_{X^n}(\tilde{e}_{\mathbf{i}}\tilde{e}_{\mathbf{k}})=  \mathbb{E}_{X^n}\left(\mathbb{E}_{X^n|X_{\mathbf{i}}}(\tilde{e}|X_{\mathbf{i}})\mathbb{E}_{X^n|X_{\mathbf{k}}}(\tilde{e}|X_{\mathbf{k}})\right)-\sum_{\mathbf{j}< \mathbf{i}}\mathbbm{1}({\mathbf{j}\leq\mathbf{j}'})  \mathbb{E}_{X^n}(\tilde{e}_{\mathbf{j}}^2)
    \\&- \sum_{\mathbf{j}'\leq \mathbf{k}} \mathbbm{1}({\mathbf{j}'\leq\mathbf{j}}) 
     \mathbb{E}_{X^n}(\tilde{e}_{\mathbf{j}'}^2) + \sum_{\mathbf{j}< \mathbf{i}, \mathbf{j}'<\mathbf{k}}\mathbbm{1}(\mathbf{j}=\mathbf{j}')
     \mathbb{E}_{X^n}(\tilde{e}_{\mathbf{j}}^2)\\
    &=\mathbb{E}_{X^n}(\mathbb{E}_{X^n|X_{\mathbf{i}}}^2(\tilde{e}|X_{\mathbf{i}}))-\sum_{\mathbf{j}<\mathbf{i}}\mathbb{E}_{X^n}(\tilde{e}^2_{\mathbf{j}})-\sum_{\mathbf{j}'\leq\mathbf{i}}\mathbb{E}_{X^n}(\tilde{e}^2_{\mathbf{j}'})+\sum_{\mathbf{j}\leq\mathbf{i}}\mathbb{E}_{X^n}(\tilde{e}^2_{\mathbf{j}})\\
    &=0.
\end{align*}
\\\textbf{Case 3:} When $\mathbf{k}\leq \mathbf{i}$ the proof is similar to case 2.
\\4) Clearly when $|\mathbf{i}|=1$, the claim holds. Assume it is true for all $\mathbf{j}$ such that $|\mathbf{j}|<\mathbf{i}$. 
Take $\mathbf{i}\in \{0,1\}^n$ and $t\in [1,n], i_t=1$ arbitrarily. We first prove the claim for $\mathbf{k}=\mathbf{i}-\mathbf{i}_t$:
\begin{align*}
 &\mathbb{E}_{X^n|X_{\mathbf{k}}}(\tilde{e}_{\mathbf{i}}|X_{\mathbf{k}})
 = \mathbb{E}_{X^n|X_{\mathbf{k}}}\left(\left(\mathbb{E}_{X^n|X_{\mathbf{i}}}(\tilde{e})-\sum_{\mathbf{j}<\mathbf{i}}\tilde{e}_{\mathbf{j}}\right)|X_{\mathbf{k}}\right)
 \\&= \mathbb{E}_{X^n|X_{\mathbf{k}}}\left(\mathbb{E}_{X^n|X_{\mathbf{i}}}(\tilde{e}|X_{\mathbf{i}})|X_{\mathbf{k}}\right)-\sum_{\mathbf{j}<\mathbf{i}}\mathbb{E}_{X^n|X_{\mathbf{k}}}(\tilde{e}_{\mathbf{j}}|X_{\mathbf{k}})
 \\&\stackrel{(a)}{=}\mathbb{E}_{X^n|X_{\mathbf{k}}}(\tilde{e}|X_{\mathbf{k}})-\sum_{\mathbf{j}<\mathbf{i}}\mathbb{E}_{X^n|X_{\mathbf{k}}}(\tilde{e}_{\mathbf{j}}|X_{\mathbf{k}})
 \stackrel{(5)}{=} \sum_{\mathbf{j}\leq \mathbf{i}-\mathbf{i}_t}\tilde{e}_{\mathbf{j}} -\sum_{\mathbf{j}<\mathbf{i}}\mathbb{E}_{X^n|X_{\mathbf{k}}}(\tilde{e}_{\mathbf{j}}|X_{\mathbf{k}})
 \\&\stackrel{(b)}{=} \sum_{\mathbf{j}\leq \mathbf{i}-\mathbf{i}_t}\mathbb{E}_{X^n|X_{\mathbf{k}}}(\tilde{e}_{\mathbf{j}}|X_{\mathbf{k}}) -\sum_{\mathbf{j}<\mathbf{i}}\mathbb{E}_{X^n|X_{\mathbf{k}}}(\tilde{e}_{\mathbf{j}}|X_{\mathbf{k}})=\sum_{s\neq t}\mathbb{E}_{X^n|X_{\mathbf{k}}}(\tilde{e}_{\mathbf{i}-\mathbf{i}_s}|X_{\mathbf{k}})
 \\&\stackrel{(c)
} {=}\sum_{s\neq t}\mathbb{E}_{X^n|X_{\mathbf{k}-\mathbf{i}_s}}(\tilde{e}_{\mathbf{i}-\mathbf{i}_s}|X_{\mathbf{k}-\mathbf{i}_s})\stackrel{(d)}=0.
\end{align*}
Where in (a) we have used $\mathbf{i}>\mathbf{k}$, also (b) follows from $\mathbf{j}<\mathbf{k}$, (c) uses $\mathbf{k}\wedge(\mathbf{i}-\mathbf{i}_s)= \mathbf{k}-\mathbf{i}_s$, and finally, (d) uses the induction assumption. Now we extend the result to general $\mathbf{k}<\mathbf{i}$. Fix  $\mathbf{k}$. Assume the claim is true for all $\mathbf{j}$ such that $\mathbf{k}<\mathbf{j}<\mathbf{i}$ (i.e $\forall \mathbf{k}<\mathbf{j}<\mathbf{i}, \mathbb{E}_{X^n|X_{\mathbf{k}}}(\tilde{e}_{X_{\mathbf{j}}|X_{\mathbf{k}}})=0$). We have:
\begin{align*}
 &\mathbb{E}_{X^n|X_{\mathbf{k}}}(\tilde{e}_{\mathbf{i}}|X_{\mathbf{k}})= \mathbb{E}_{X^n|X_{\mathbf{k}}}\left(\mathbb{E}_{X^n|X_{\mathbf{i}}}(\tilde{e}|X_{\mathbf{i}})-\sum_{\mathbf{j}<\mathbf{i}}\tilde{e}_{\mathbf{j}}|X_{\mathbf{k}}\right)
\\& = \mathbb{E}_{X^n|X_{\mathbf{k}}}\left(\mathbb{E}_{X^n|X_{\mathbf{i}}}(\tilde{e}|X_{\mathbf{i}})|X_{\mathbf{k}}\right)-\sum_{\mathbf{j}\leq \mathbf{k}}\mathbb{E}_{X^n|X_{\mathbf{k}}}(\tilde{e}_{\mathbf{j}}|X_{\mathbf{k}})
\\& =\mathbb{E}_{X^n|X_{\mathbf{k}}}(\tilde{e}|X_{\mathbf{k}})-\sum_{\mathbf{j}\leq \mathbf{k}}\tilde{e}_{\mathbf{j}}\stackrel{\eqref{eq:sum_int}}{=}0.
\end{align*}

\end{proof}
\begin{Remark}
 The second condition above is equivalent to condition (2) in Proposition \ref{prop:belong1}. The fourth condition is equivalent to (1) in Proposition \ref{prop:belong1}.
\end{Remark}
Using propositions \ref{prop:belong1} and \ref{prop:belong2}, we conclude that $\tilde{e}_{\mathbf{i}}\in \mathcal{G}_{i_1}\otimes \mathcal{G}_{i_2}\otimes\dotsb \otimes \mathcal{G}_{i_n}, \forall \mathbf{i}\in \{0,1\}^n$. This completes the proof of Lemma \ref{Lem:unique}.
\end{proof}

\subsection{Proof of Theorem 1}
\begin{IEEEproof}
The proof involves three main steps. First, we bound the Pearson correlation between the real-valued functions $\tilde{e}$, and $\tilde{f}$. In the second step, we relate the correlation to the probability that the two functions are equal and derive the lower bound. Finally, in the third step we use the lower bound proved in the first two steps to derive the upper bound. 
\\\textbf{Step 1:} From Remark \ref{Rem:exp_0}, the expectation of both functions is 0. So, the Pearson correlation is given by $\frac{\mathbb{E}_{X^n,Y^n}(\tilde{e}\tilde{f})}{\left(rq(1-q)(1-r)\right)^{\frac{1}{2}}}$. Our goal is to bound this value. We have:
\begin{align}
\nonumber &\mathbb{E}_{X^n,Y^n}(\tilde{e}\tilde{f})
 \stackrel{(a)}=\mathbb{E}_{X^n,Y^n}\left((\sum_{\mathbf{i}\in\{0,1\}^n}\tilde{e}_{\mathbf{i}})(\sum_{\mathbf{k}\in\{0,1\}^n}\tilde{f}_{\mathbf{k}})\right)
 \\&\stackrel{(b)}{=}\sum_{\mathbf{i}\in\{0,1\}^n}\sum_{\mathbf{k}\in\{0,1\}^n}\mathbb{E}_{X^n,Y^n}( \tilde{e}_{\mathbf{i}}\tilde{f}_{\mathbf{k}}).
 \label{Eq:init1}
\end{align}
In (a) we have used Remark \ref{Rem:Dec}, and in (b) we use linearity of expectation.
 Using the fact that $\tilde{e}_{\mathbf{i}}\in \mathcal{G}_{i_1}\otimes \mathcal{G}_{i_2}\otimes\dotsb \otimes \mathcal{G}_{i_n}$ and Lemma \ref{Lem:tensor_dec}, we have: 
\begin{align}
 \tilde{e}_{\mathbf{i}}=c_{\mathbf{i}}\prod_{t:i_t=1} \tilde{e}_{t}(X_{t}),  \tilde{f}_{\mathbf{k}}=d_{\mathbf{k}}\prod_{t:k_t=1} \tilde{f}_{t}(X_{t}).
 \label{eq:init2}
\end{align}
We replace $\tilde{e}_{\mathbf{i}}$ and $ \tilde{f}_{\mathbf{k}}$ in \eqref{Eq:init1}:  
\begin{align}
\nonumber& \mathbb{E}_{X^n,Y^n}( \tilde{e}_{\mathbf{i}}\tilde{f}_{\mathbf{k}})
\stackrel{\eqref{eq:init2}}{=}\mathbb{E}_{X^n,Y^n}\left(\left(c_{\mathbf{i}} \prod_{t:i_t=1} \tilde{e}_t(X_{t})\right)\left(d_{\mathbf{k}} \prod_{s:k_s=1} \tilde{f}_s(Y_{s})\right)\right)
\\&\nonumber\stackrel{(a)}{=}c_{\mathbf{i}}d_{\mathbf{k}} \mathbb{E}_{X^n,Y^n}\left(\prod_{t:i_t=1, k_t=1}\tilde{e}_t(X_{t})\tilde{f}_t(Y_{t})\right)
\end{align}
\begin{align}&\mathbb{E}_{X^n}\left(\prod_{t:i_t=1, k_t=0} \tilde{e}_t(X_{t})\right)\mathbb{E}_{Y^n}\left( \prod_{t:i_t=0, k_t=1} \tilde{f}_t(Y_{t})\right)\nonumber\\
&\nonumber\stackrel{(b)}{=}\mathbbm{1}(\mathbf{i}=\mathbf{k})c_{\mathbf{i}}d_{\mathbf{k}}\prod_{t:i_t=1} \mathbb{E}_{X^n,Y^n}\left(\tilde{e}_t(X_{t}) \tilde{f}_t(Y_{t})\right)
\\&\stackrel{(c)}{\leq} \mathbbm{1}(\mathbf{i}=\mathbf{k})c_{\mathbf{i}}d_{\mathbf{k}}  (1-2\epsilon)^{N_{\mathbf{i}}} \prod_{t:i_t=1}\mathbb{E}_{X^n}^{\frac{1}{2}}\left(\tilde{e}_t^2(X_{t})\right)\mathbb{E}_{Y^n}^{\frac{1}{2}}\left( \tilde{f}^2_{t}(Y_t)\right)\nonumber\\
&
\stackrel{(d)}{=} \mathbbm{1}(\mathbf{i}=\mathbf{k}) (1-2\epsilon)^{N_{\mathbf{i}}}\mathbf{P}^{\frac{1}{2}}_{\mathbf{i}}\mathbf{Q}^{\frac{1}{2}}_{\mathbf{i}}
=  \mathbbm{1}(\mathbf{i}=\mathbf{k}) C_{\mathbf{i}}\mathbf{P}^{\frac{1}{2}}_{\mathbf{i}}\mathbf{Q}^{\frac{1}{2}}_{\mathbf{i}}.
\label{eq:intemid1}
\end{align}
 In (a) we have used the fact that in a pair of DMS's, $X_i$ and $Y_j$ are independent for $i\neq j$. (b) holds since from Proposition \ref{prop:belong2}, $\mathbb{E}(\tilde{e}_{i})=\mathbb{E}(\tilde{f}_i)=0, \forall i\in [1,n]$. We prove (c) in Lemma \ref{Lem:init} below. In (d) we have used proposition \ref{pr:partfun}.
\begin{Lemma}
\label{Lem:init}
 Let $g(X)$ and $h(Y)$ be two arbitrary zero-mean, real valued functions, then:
\begin{align*}
 \mathbb{E}_X(g(X)h(Y))\leq (1-2\epsilon)\mathbb{E}_X^{\frac{1}{2}}(g^2(X))\mathbb{E}_Y^{\frac{1}{2}}(h^2(Y)).
\end{align*}
 \end{Lemma}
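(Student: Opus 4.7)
The plan is to reduce the inequality to an explicit one-letter computation on the binary alphabet, since in the paper's setup both $X$ and $Y$ are binary and share a common marginal $p\triangleq P(X=1)=P(Y=1)$. Because the space of mean-zero real-valued functions of a binary variable is one-dimensional, the claim collapses to a statement about the Pearson correlation of $(X,Y)$, and this in turn will be controlled by the elementary inequality $4p(1-p)\leq 1$.

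First I will parametrize: every zero-mean function on $\{0,1\}$ takes the form $g(X)=\alpha(X-p)$ and $h(Y)=\beta(Y-p)$ for some scalars $\alpha,\beta\in\mathbb{R}$, giving $\mathbb{E}(g^{2})=\alpha^{2}p(1-p)$, $\mathbb{E}(h^{2})=\beta^{2}p(1-p)$, and $\mathbb{E}(g(X)h(Y))=\alpha\beta\,\mathrm{Cov}(X,Y)$. Next, writing $P(X\neq Y)=P(X=0,Y=1)+P(X=1,Y=0)$ and using the symmetry of the marginals forces $P(X=Y=1)=p-\epsilon/2$, and hence $\mathrm{Cov}(X,Y)=p(1-p)-\epsilon/2$.

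Substituting, the claim reduces to $\alpha\beta\bigl(p(1-p)-\epsilon/2\bigr)\leq (1-2\epsilon)|\alpha\beta|\,p(1-p)$. When $\alpha\beta\geq 0$ this cancels down to $2\epsilon\,p(1-p)\leq \epsilon/2$, equivalently $4p(1-p)\leq 1$, which is elementary. When $\alpha\beta<0$, under the standing information-theoretic hypothesis $\epsilon\leq 1/2$ used throughout the paper, the same universal bound on $p(1-p)$ closes the argument. The main obstacle is not conceptual but bookkeeping: the symmetric-marginal assumption is implicit in the paper's binary setup and must be stated explicitly, and the signed (rather than absolute-value) form of the bound forces a small case split on the sign of $\alpha\beta$. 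Once those points are handled, the proof is a one-line reduction to $4p(1-p)\leq 1$.
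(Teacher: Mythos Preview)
The paper does not actually prove this lemma in the text (it defers to the companion preprint), so there is no in-paper argument to compare against. Your reduction to a one-dimensional computation is the natural route, and Case~1 ($\alpha\beta\ge 0$) is correct once the equal-marginal hypothesis $p_X=p_Y=p$ is in force; note, though, that this hypothesis is genuinely needed and not merely cosmetic (for $p_X=0.5$, $p_Y=0.1$, $\epsilon=0.4$ one already gets $\mathrm{Corr}(X,Y)=1/3>0.2=1-2\epsilon$, so the lemma fails for unequal marginals even when $\alpha\beta>0$).

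The real gap is Case~2. After dividing by $|\alpha\beta|$ the inequality you must establish is
\[
\tfrac{\epsilon}{2}-p(1-p)\ \le\ (1-2\epsilon)\,p(1-p),
\qquad\text{equivalently}\qquad
\epsilon\bigl(1+4p(1-p)\bigr)\ \le\ 4p(1-p).
\]
The universal bound $4p(1-p)\le 1$ together with $\epsilon\le 1/2$ does \emph{not} imply this. Take $p=0.3$ and $\epsilon=0.48$; this is a legitimate joint law with equal marginals, since $P(X{=}Y{=}1)=p-\epsilon/2=0.06\ge 0$ and $P(X{=}Y{=}0)=0.46\ge 0$. Then the left side equals $0.48\times 1.84=0.8832$ while the right side equals $0.84$, so the displayed inequality is false. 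Concretely, with $g(X)=X-0.3$ and $h(Y)=-(Y-0.3)$ one computes $\mathbb{E}\bigl(g(X)h(Y)\bigr)=-\mathrm{Cov}(X,Y)=\epsilon/2-p(1-p)=0.03$, whereas $(1-2\epsilon)\sqrt{\mathbb{E}(g^{2})\,\mathbb{E}(h^{2})}=0.04\times 0.21=0.0084$, directly violating the stated bound. Thus either the lemma is meant only for the uniform marginal $p=1/2$ (in which case your Case~2 reduction collapses exactly to $\epsilon\le 1/2$ and is fine), or some further hypothesis is being suppressed; in any event, ``the same universal bound on $p(1-p)$ closes the argument'' is not a valid step for general $p$.
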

\begin{proof}
Please refer to the \cite{arxiv1}. 
\end{proof}
 Using equations \eqref{Eq:init1} and \eqref{eq:intemid1} we get: 
\begin{align*}
 \mathbb{E}_X(\tilde{e}\tilde{f})\leq  \sum_{\mathbf{i}}C_{\mathbf{i}}\mathbf{P}^{\frac{1}{2}}_{\mathbf{i}}\mathbf{Q}^{\frac{1}{2}}_{\mathbf{i}}.
\end{align*}
\\\textbf{Step 2:} We use the results from step one to derive a bound on $P(e\neq f)$. Define $a\triangleq P(e(X^n)=1,f(Y^n)=1)$, $b\triangleq P(e(X^n)=0,f(Y^n)=1)$, $c\triangleq P(e(X^n)=1,f(Y^n)=0)$, and $d\triangleq P(e(X^n)=0, f(Y^n)=0)$, then
 \begin{align}
\label{eq:var}
& \mathbb{E}_{X^n,Y^N}(\tilde{e}(X^n)\tilde{f}(Y^n))\nonumber
\\&=a(1-q)(1-r)-bq(1-r)-c(1-q)r+dqr,
\end{align}
We write this equation in terms of $\sigma\triangleq P(f\neq g)$, q, and r using the following relations:
\begin{align*}
&1) a+c=q, \qquad 2) b+d=1-q,\qquad  
\\&3) a+b=r,\qquad  4) c+d=1-r, \qquad 5) b+c=\sigma.
\end{align*}
 Solving the above we get:
\begin{align}
&\nonumber a=\frac{q+r-\sigma}{2}, \qquad b=\frac{r+\sigma-q}{2},\qquad 
 \\&c=\frac{q-r+\sigma}{2},\qquad d=1-\frac{q+r+\sigma}{2}.
\label{eq:intermed}
\end{align}
 
We replace $a,b,c$, and $d$ in \eqref{eq:var} by their values in $\eqref{eq:intermed}$:
\begin{align*}
&\frac{\sigma}{2}\geq (\frac{q+r}{2})(1-q)(1-r)+(\frac{q-r}{2})q(1-r)
\\&+(\frac{r-q}{2})(1-q)r+qr(1-\frac{q+r}{2})-\sum_{\mathbf{i}}C_\mathbf{i}\mathbf{P}_{\mathbf{i}}^{\frac{1}{2}}\mathbf{Q}_{\mathbf{i}}^{\frac{1}{2}}\\
&\Rightarrow \sigma\geq q+r-2rq-2\sum_{\mathbf{i}}C_\mathbf{i}\mathbf{P}_{\mathbf{i}}^{\frac{1}{2}}\mathbf{Q}_{\mathbf{i}}^{\frac{1}{2}}
\\&\Rightarrow \sigma \geq (\sqrt{q(1-r)}-\sqrt{r(1-q)})^2
\\&+2\sqrt{q(1-q)r(1-r)}-2\sum_{\mathbf{i}}C_\mathbf{i}\mathbf{P}_{\mathbf{i}}^{\frac{1}{2}}\mathbf{Q}_{\mathbf{i}}^{\frac{1}{2}}\\
&\Rightarrow \sigma \geq 2\sqrt{q(1-q)r(1-r)}-2\sum_{\mathbf{i}}C_\mathbf{i}\mathbf{P}_{\mathbf{i}}^{\frac{1}{2}}\mathbf{Q}_{\mathbf{i}}^{\frac{1}{2}}
\end{align*}
 On the other hand $\mathbb{E}_X(\tilde{e}^2)=q(1-q)=\sum_{\mathbf{i}}\mathbf{P}_{\mathbf{i}}$, where the last equality follows from the fact that  $\tilde{e}_{\mathbf{i}}$'s are uncorrelated. This proves the lower bound. Next we use the lower bound to derive the upper bound.
\\\textbf{Step 3:} The upper-bound can be derived by considering the function $h(Y^n)$ to be the complement of $f(Y^n)$ (i.e. $h(Y^n)\triangleq 1\oplus_2 f(Y^n)$.) In this case $P({h}(Y^n)=1)=P(f(Y^n)=0)=1-r$. The corresponding real function for $h(Y^n)$ is:
\begin{align*}
 \tilde{h}(Y^n)&=
\begin{cases}
 r  \qquad  &h(Y^n)=1,\\
-(1-r) \qquad & h(Y^n)=0,
\end{cases}
\\&=
\begin{cases}
 r  \qquad  &f(Y^n)=0,\\
-(1-r) \qquad&  f(Y^n)=1,
\end{cases}
\Rightarrow \tilde{h}(Y^n)=-\tilde{f}(Y^n).
\end{align*} 
So, $\tilde{h}(Y^n)=-\sum_{\mathbf{i}}\tilde{f}_{\mathbf{i}}$. Using the same method as in the previous step, we have:
\begin{align*}
 &\mathbb{E}_{X^n,Y^n}(\tilde{e}\tilde{h})=-\mathbb{E}_{X^n,Y^n}(\tilde{e}\tilde{f})\leq \sum_{\mathbf{i}}C_{\mathbf{i}}\mathbf{P}^{\frac{1}{2}}_{\mathbf{i}}\mathbf{Q}^{\frac{1}{2}}_{\mathbf{i}} 
 \\&\Rightarrow  P(e(X^n)\neq h(Y^n)) \geq2\sqrt{\sum_{\mathbf{i}}\mathbf{P}_{\mathbf{i}}}\sqrt{\sum_{\mathbf{i}}\mathbf{Q}_{\mathbf{i}}}-2\sum_{\mathbf{i}}C_\mathbf{i}\mathbf{P}_{\mathbf{i}}^{\frac{1}{2}}\mathbf{Q}_{\mathbf{i}}^{\frac{1}{2}} 
\end{align*}
On the other hand $P(e(X^n)\neq h(Y^n))=P(e(X^n)\neq 1\oplus f(Y^n))=P(e(X^n)= f(Y^n))=1-P(e(X^n)\neq f(Y^n)$. So, 
\begin{align*}
&1-P(e(X^n)\neq f(Y^n)) \geq 2\sqrt{\sum_{\mathbf{i}}\mathbf{P}_{\mathbf{i}}}\sqrt{\sum_{\mathbf{i}}\mathbf{Q}_{\mathbf{i}}}-2\sum_{\mathbf{i}}C_\mathbf{i}\mathbf{P}_{\mathbf{i}}^{\frac{1}{2}}\mathbf{Q}_{\mathbf{i}}^{\frac{1}{2}} 
\\&\Rightarrow 
P(e(X^n)\neq f(Y^n)) \leq 1- 2\sqrt{\sum_{\mathbf{i}}\mathbf{P}_{\mathbf{i}}}\sqrt{\sum_{\mathbf{i}}\mathbf{Q}_{\mathbf{i}}}+2\sum_{\mathbf{i}}C_\mathbf{i}\mathbf{P}_{\mathbf{i}}^{\frac{1}{2}}\mathbf{Q}_{\mathbf{i}}^{\frac{1}{2}} 
.
\end{align*}
This completes the proof.
\end{IEEEproof}


\end{document}